\theoremstyle{plain}
\theoremstyle{definition}
\newtheorem{theorem}{Theorem}[section]
\newtheorem{lemma}[theorem]{Lemma}
\newtheorem{definition-theorem}[theorem]{Definition-Theorem}
\newtheorem{definition-proposition}[theorem]{Definition-Proposition}
\newtheorem{corollary}[theorem]{Corollary}
\newtheorem{example}{Example}[section]
\newtheorem{examples}{Example}[subsection]
\newtheorem{remark}{Remark}[section]
\newtheorem{remarks}{Remarks}[section]
\newtheorem{definition}{Definition}[section]
\numberwithin{equation}{section} 
\DeclareMathOperator{\End}{End}
\DeclareMathOperator{\sgn}{sgn}
\def\ra{{\rightarrow}}
\def\det{\mathrm {det}}
\def\Pf{\mathrm {Pf}}
\def\End{\mathrm {End}}
\def\&{&{\hskip -20pt}}
\def\be{\begin{equation}}
\def\ee{\end{equation}}
\def\bea{\begin{eqnarray}}
\def\eea{\end{eqnarray}}
\def\bt{\begin{theorem}}
\def\et{\end{theorem}}
\def\bex{\begin{example}\small \rm}
\def\eex{\end{example}}
\def\bexs{\begin{examples}\small \rm}
\def\eexs{\end{examples}}
\def\ra{\rightarrow}
\def\ss{\subset}
\def\br{\begin{remark}\small \rm}
\def\er{\end{remark}}
\def\CC{{\mathcal C}}
\def\FF {{\mathcal F}}
\def\HH{{\mathcal H}}
\def\PP{{\mathcal P}}
\def\Mb{\mathbf{M}}
\def\Nb{\mathbf{N}}
\def\Zb{\mathbf{Z}}
\def\tb{\mathbf{t}}
\def\xb{\mathbf{x}}
\newcommand{\DP}{\mathop\mathrm{DP}\nolimits}
\def\bp{\begin{Proposition}\rm}
\def\ep{\end{Proposition}}
\def\bc{\begin{corollary}}
\def\ec{\end{corollary}}
\def\bl{\begin{lemma}\em}
\def\el{\end{lemma}}
\def\be{\begin{equation}}
\def\ee{\end{equation}}
\def\br{\begin{remark}\rm\small}
\def\er{\end{remark}}
\def\brs{\begin{remarks}.\\ \rm\
\begin{enumerate}}
\def\ers{\end{enumerate}\end{remarks}}
\def\bea{\begin{eqnarray}}
\def\eea{\end{eqnarray}}
\begin{document}

\begin{center}
\begin{Large}\fontfamily{cmss}
\fontsize{17pt}{27pt}
\selectfont
	\textbf{Bilinear expansion of Schur functions in Schur Q-functions: a fermionic approach}
	\end{Large}
	
\bigskip \bigskip
\begin{large} 
J. Harnad$^{1, 2}$\footnote[1]{e-mail:harnad@crm.umontreal.ca} 
and A. Yu. Orlov$^{3}$\footnote[2]{e-mail:orlovs55@mail.ru} 
 \end{large}
 \\
\bigskip

\begin{small}
$^{1}${\em Centre de recherches math\'ematiques, Universit\'e de Montr\'eal, \\C.~P.~6128, succ. centre ville, Montr\'eal, QC H3C 3J7  Canada}\\
$^{2}${\em Department of Mathematics and Statistics, Concordia University\\ 1455 de Maisonneuve Blvd.~W.~Montreal, QC H3G 1M8  Canada}\\
$^{3}${\em Shirshov Institute of Oceanology, Russian Academy of Science, Nahimovskii Prospekt 36, Moscow 117997, Russia }
\end{small}
 \end{center}
\medskip

\begin{abstract}
An identity is derived expressing Schur functions as sums over products of pairs of Schur $Q$-functions,
generalizing previously known special cases.  This is shown to follow from their representations as vacuum 
expectation values (VEV's) of products of either charged or neutral fermionic creation and annihilation  operators, Wick's theorem
and a factorization identity for VEV's of products of two mutually anticommuting sets of neutral fermionic operators.
 \end{abstract}

\bigskip

\section{Introduction}
\label{introduction}

 Fermionic methods are central to Sato's construction of $\tau$-functions for the KP infinite integrable 
 hierarchy,  as well as the BKP hierarchy \cite{Sa, DJKM1, DJKM2, JM, KvdL, vdLO, HB}. 
 In this work, we make use of the relations between charged and neutral fermionic operators 
 to derive a bilinear identity relating Schur functions \cite{Mac}, which are the basic building blocks 
 for solutions of the KP hierarchy \cite{Sa, JM}, to Schur's  $Q$-functions, which play a similar r\^ole 
 with respect to the BKP hierarchy \cite{DJKM1, You}.
 
An identity was derived  in \cite{BHH},  expressing determinants of  submatrices of skew matrices as  sums 
over products of the Pfaffians of their principal minors. Geometrically, this may be interpreted as a 
bilinear relation between the {\em Pl\"ucker map}, which embeds Grassmannians  of $k$-dimensional subspaces of a given vector space into the projectivization  of the space of exterior $k$-forms, and the {\em Cartan map} \cite{Ca}, which  embeds the Grassmannian   of maximal isotropic subspaces with respect to a complex scalar product into the projectivization of the irreducible spinor modules.
 The result  in \cite{BHH} was  based on Cartan's construction of bilinear forms on Clifford modules, with values in spaces of  homogeneous exterior forms. The  main result derived here is Theorem \ref{S_QQ}, Section \ref{schur_Q_schur_bilin}), which  may be viewed  as a function theoretic realization of this identity,  with the determinant identified, through the Jacobi-Trudi identity \cite{Mac} with  the Schur function, and the Pfaffians, with Schur's  $Q$-functions.
 
 Section \ref{fermionic_fock} recalls the definition of creation and annihilation operators on fermionic Fock space
  as linear generators of an infinite dimensional Clifford algebra. Two mutually 
 anticommuting subalgebras generated by neutral fermions are defined, and a  key factorization Lemma \ref{factorization_lemma}  
  given for vacuum state expectation values (VEV's) of products of linear elements. The representation of Schur functions and 
 Schur Q-functions as VEV's of products of creation and annihilation operators conjugated by elements of the 
 infinite abelian groups that generate KP and BKP flows is recalled in Section \ref{schur_schur_Q_fermions}.
 Section \ref{polariz_binary} introduces the notion of {\em polarizations} associated to integer partitions,
 and the associated products of neutral fermion operators determined by binary sequences.
The main result is Theorem \ref{S_QQ}, Section \ref{schur_Q_schur_bilin}, which gives an expression for Schur functions, 
 evaluated on the odd KP flow variables, as  sums over products of pairs of Schur  $Q$-functions, 
generalizing certain previously known special cases \cite{Mac} to arbitrary Schur functions.

\section{Fermionic Fock space}
\label{fermionic_fock}

\subsection{Charged and neutral fermions }
\label{fermions}

For a separable Hilbert space $\HH$,  with orthonormal basis $\{e_j\}_{j\in \Zb}$,
the corresponding fermionic Fock space $\FF$ is defined as the semi-infinite wedge product  space \cite{Sa, JM}:
\be
\FF = \Lambda^{\infty/2} \HH = \bigoplus_{n\in \Zb} \FF_n,
\ee
with elements  denoted as {\em ket} vectors  $| w\rangle$.
The  sector $\FF_n$  with fermionic charge $n\in \Zb$ has an orthonormal basis, denoted
$\{|\lambda;n\rangle\}$, labelled by pairs $(\lambda, n)$  of an 
integer partitition $\lambda$ and the integer $n$,  defined by:
\be
|\lambda;n\rangle := e_{l_1} \wedge e_{l_2} \wedge \cdots , 
\ee
where the infinite sequence of  integers  $\{l_i\}_{i\in \Nb^+}$, called {\em particle positions}, 
 are related to the parts $(\lambda_1\ge \lambda_2 \cdots , \lambda_{\ell(\lambda)}, 0, \dots)$ 
 of  $\lambda$ by
\be
l_i := \lambda_i - i +n , \quad i\in \Nb^+.
\ee
The {\em length} $\ell(\lambda)$ of the partition $\lambda$ is the number of positive parts  $\{\lambda_i\}_{i=1, \dots, \ell(\lambda)}$ 
and the finite sequence is completed by adding an infinite sequence of $0$'s following these. Its {\em weight} is
\be
|\lambda| := \sum_{i=1}^{\ell(\lambda)} \lambda_i.
\ee
The particle positions  $\{l_i\}_{i\in \Nb^+}$, form a strictly decreasing sequence that saturates, after $\ell(\lambda)$ terms,
to all subsequent consecutively decreasing integers.

The algebra of fermionic operators on $\FF$ form an irreducible representation 
\bea
\Gamma: \CC_{\HH+\HH^*}(Q)&\&\ra \End(\FF)\cr
\Gamma: \xi &\& \mapsto \Gamma_\xi
\eea
 of the Clifford algebra  $\CC_{\HH+\HH^*}(Q)$ on $\HH + \HH^*$ corresponding to the (complex) scalar product $Q$ defined by
\be
Q(u+\mu, v + \nu) := \mu(v) + \nu(u), \quad u, v \in \HH, \ \mu, \nu \in \HH^*.
\ee
 These are realized as endomorphisms of $\FF$, with the linear elements  acting  by exterior and interior multiplication:
 \be
 \Gamma_v  = v \wedge  , \quad \Gamma_\mu   = i_\mu  , \quad v \in \HH, \ \mu\in \HH^*.
 \ee
 
 Denoting the dual basis for $\HH^*$ as $\{e^j\}_{j\in \Zb}$,   with
\be
e^j(e_k) = \delta_{jk},
\ee
 $\CC_{\HH+\HH^*}(Q)$ is generated by the scalars and linear elements, with the
 representations of the basis elements  denoted
\be
\psi_j := e_j \wedge,  \quad \psi^\dag_j := i_{e^j}, \quad j \in \Zb.
\ee
These are referred to as (charged) fermionic creation and annihilation operators, respectively,
and  satisfy the anticommutation  relations:
\be
[\psi_j,\psi_k]_+= [\psi^\dag_j,\psi^\dag_k]_+=0,\quad [\psi_j,\psi^\dag_k]_+=\delta_{jk} .
\label{charged-canonical}
\ee

The {\em vacuum} element $|n\rangle$ in each charge sector $\FF_n$ is the basis element
corresponding to the trivial partition $\lambda = \emptyset$:
\be
| n\rangle :=|\emptyset; n \rangle = e_{n-1} \wedge e_{n-2} \wedge \cdots .
\ee
Elements of the dual space $\FF^*$ are denoted as {\em bra} vectors $\langle w |$,
with the dual basis $\{\langle \lambda ;n|\}$ for $\FF^*_n$ defined by the pairing
\be
\langle \lambda; n | \mu; m\rangle = \delta_{\lambda \mu} \delta_{nm}.
\ee
For KP $\tau$-functions, we need only  consider the $n=0$ charge sector $\FF_0$,
and generally drop the charge $n$ symbol, denoting the basis elements simply as
\be
|\lambda\rangle :=|\lambda;0\rangle.
\ee 
 For $j>0$, $\psi_{-j}$ and $\psi^\dag_{j-1}$
(resp. $\psi^\dag_{-j}$ and $\psi_{j-1}$) annihilate the right (resp. left) vacua:
\bea
\psi_{-j} |0\rangle &\&=0, \quad \psi^\dag_{j-1} |0\rangle =0, \quad \forall  j >0,
\label{vac_annihil_psi_j_r}
 \\
\langle 0| \psi^\dag_{-j}  &\&=0, \quad \langle 0 | \psi_{j-1}  =0, \quad \forall  j >0.
\label{vac_annihil_psi_j_l}
\eea

 Neutral fermions $\phi^+_j$ and $\phi^-_j$ are defined \cite{DJKM1} as 
\be
\phi^+_j :=\frac{\psi_j+ (-1)^j\psi^\dag_{-j}}{\sqrt 2},\quad 
\phi^-_j :=i\frac{\psi_j-(-1)^j \psi^\dag_{-j}}{\sqrt 2},\quad j \in \Zb
\label{charged-neutral}
\ee
(where $i=\sqrt{-1}$), and satisfy
\be
 [\phi^+_j,\phi^-_k]_+=0,\quad [\phi^+_j,\phi^+_k]_+ = [\phi^-_j,\phi^-_k]_+ =(-1)^j \delta_{j+k,0}.
 \label{neutral-canonical}
\ee
In particular,
\be
(\phi^+_0)^2=(\phi^-_0)^2=\tfrac{1}{2}.
\label{square_phi_0}
\ee
Acting on the vacua $|0\rangle$ and $|1 \rangle$, we have
\bea
\phi^+_{-j}|0\rangle &\&= \phi^-_{-j} |0\rangle =\phi^+_{-j}|1\rangle = \phi^-_{-j} |1\rangle =0 , \quad  \forall  j > 0  , \quad \forall  j > 0, 
\label{phi_vac_r} \\
\langle 0| \phi^+_{j} &\&= \langle 0|\phi^-_{j} = \langle 1| \phi^+_{j} = \langle 1|\phi^-_{j}  =0 , \quad  \forall  j > 0 ,
\label{phi_vac1_l}
\\
\phi^+_0|0\rangle &\& = 
- i \phi^-_0 |0\rangle =
\tfrac{1}{\sqrt{2}} \psi_0|0\rangle = \tfrac{1}{\sqrt{2}} |1\rangle  ,
\label{phi_0_ac_r}    
\\
 \langle 0| \phi^+_0 &\& = 
 i \langle 0|\phi^-_0  =
\tfrac{1}{\sqrt{2}} \langle  0 | \psi_0^\dag = \tfrac{1}{\sqrt{2}}\langle 1|.
\label{hatphi_0_vac_l}  
\eea
The pairwise expectation values are:
\bea\label{phi_pairing}
    \langle 0| \phi^+_j \phi^+_k |0\rangle&\& =\langle 0| \phi^-_j\phi^-_k|0\rangle =
    \begin{cases}
      (-1)^k\delta_{j,-k}& \text{if}\ k>0,\\
      \tfrac12\delta_{j,0}& \text{if}\ k=0,\\
      0& \text{if}\ k<0,
    \end{cases} \\
\langle 0|\phi^+_j\phi^-_k|0\rangle &\&= -\langle 0|\phi^-_j \phi^+_k|0\rangle =\tfrac {i}{2} \delta_{j,0}\delta_{k,0}
\label{phi_pairing'}.
\eea

\subsection{Fermionic Wick theorem}
\label{fermionic_wick}

For an even number of fermionic operators $(w_1, \dots, w_{2L})$ that anticommute:
\be
[w_j, w_k]_+=0, \quad 1 \le j, k \le 2L,
\label{anticommute}
\ee
the matrix with elements  $\langle 0|w_jw_k |0\rangle$ is skew symmetric, and Wick's theorem implies that
the vacuum state expectation value  $\langle 0|w_1\cdots w_{2L} |0\rangle$ of the product is given by the Pfaffian
\be
\langle 0|w_1\cdots w_{2L} |0\rangle = \Pf\left( \langle 0|w_jw_k |0\rangle \right)_{1\le j,k \le 2L}.
\label{Wick-Pf}
\ee

On the other hand,  if  the odd elements  $w_1,w_3,\dots$ are linear combinations of creation 
operators $\{\psi_j\}_{j\in\Zb}$ and the even ones
$w_2, w_4,\dots$ linear combinations of annihilation operator $\{\psi_j^\dag\}_{j\in\Zb}$,  Wick's theorem implies
\be
\langle 0|w_1\cdots w_{2L} |0\rangle = \det\left( \langle 0|w_jw_k |0\rangle \right)_{j=1,3,\dots; k=2,4,\dots}.
\label{Wick-det}
\ee

\subsection{Current components and a factorization lemma}
\label{currents_factorization}

The positive current components of charged fermions, defined as
\be
 J_n =\sum_{i\in\mathbb{Z}} \psi_i \psi^\dag_{i+n},\quad n\in \Nb^+,
 \label{J_n_def}
\ee
commute amongst themselves
\be
[J_n, J_m ] =0,\quad n,m \in \Nb^+,
\ee
and generate the KP flows \cite{Sa, JM}.

The neutral fermion current components $J^B_n$ and $\hat{J}^B_n$ are defined as
\be
J^{B+}_n:=\tfrac 12 \sum_{j\in\mathbb{Z}} (-1)^{j+1}\phi^+_j \phi^+_{-j-n}\,,\quad 
J^{B-}_n:=\tfrac 12 \sum_{j\in\mathbb{Z}} (-1)^{j+1}\phi^-_j\phi^-_{-j-n}, \quad n\in \Nb^+.
\label{J_pq_B_def}
\ee
The even components $\{J^{B+}_{2p}, J^{B-}_{2p}\}$ all vanish, 
while the odd ones mutually commute:
\be
[J^{B+}_{2p-1}, J^{B+}_{2q-1}]  = 0,\quad [J^{B-}_{2p-1}, J^{B-}_{2q-1}]=0,\quad
[J^{B+}_{2p-1}, J^{B-}_{2q-1}] =0, \quad p,q \in \Nb^+ ,
\ee
and both generate BKP flows \cite{DJKM1, DJKM2, JM}.

By (\ref{vac_annihil_psi_j_r}), the positive  current components $J_n$  annihilate the vacuum $|0\rangle$:
\be
J_n |0 \rangle =0, \quad n\in \Nb^+
\label{J_n_vac_0}
\ee
and,  by (\ref{phi_vac_r}), the neutral current components $J^B_n$ and $\hat{J}^B_n$  annihilate the vacua $|0\rangle$ and $|1\rangle$:
\be
J^{B+}_n|0 \rangle = 0, \quad J^{B-}_n|0 \rangle =0, \quad J^{B+}_n|1 \rangle = 0, \quad J^{B-}_n|1 \rangle, \quad n\in \Nb^+.
\label{J_B_vac}
\ee
It  also follows from (\ref{J_n_def}),  (\ref{J_pq_B_def}) and 
\be
\psi_j =\frac{\phi^+_j-i\phi^-_j}{\sqrt 2},\quad 
\psi^\dag_{-j} =(-1)^j\frac{\phi^+_j+i\phi^-_j}{\sqrt 2},
\label{psi_j_phI_j}
\ee
that:
\bl
For  odd $n= 2p-1$, 
\be
 J_{2p-1}=J^{B+}_{2p-1}+J^{B-}_{2p-1}, \quad p \in \Nb^+.
 \label{J_gamma_gamma_n}
\ee
\el

The following factorization property of VEV's is proved in \cite{JM} and \cite{You}.
\begin{lemma}
If  ${\bf a}^+$ and ${\bf a}^-$ are (finite or infinite) sums of monomials in the $\{\phi^+_i\}_{i\neq 0}$ of even  and odd degrees, respectively,
and  $\hat{{\bf a}}^+$ and $\hat{{\bf a}}^-$ are sums of monomials in the $\{\phi^-_i\}_{i\neq 0}$ of even  and odd degrees, respectively,
then
\bea
\langle 0 | ({\bf a}^+ + \phi^+_0  {\bf a}^-)(\hat{{\bf a}}^+ +\phi^-_0  \hat{{\bf a}}^-) |0 \rangle
&\&= \langle 1 | ({\bf a}^+ + \phi^+_0  {\bf a}^-) (\hat{{\bf a}}^+ +\phi^-_0  \hat{{\bf a}}^-)|1 \rangle  \cr
&\& = \langle 0 | {\bf a}^+  |0 \rangle\langle 0 | \hat{{\bf a}}^+  |0 \rangle.
\eea
\end{lemma}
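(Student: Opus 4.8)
The plan is to expand the product into its four terms, compute each vacuum expectation value (VEV) by Wick's theorem, and exploit the fact that the only contraction mixing a $\phi^+$ with a $\phi^-$ that does not vanish is $\langle 0|\phi^+_0\phi^-_0|0\rangle=\tfrac i2$. First I would write
\[
(\mathbf{a}^+ + \phi^+_0\mathbf{a}^-)(\hat{\mathbf{a}}^+ + \phi^-_0\hat{\mathbf{a}}^-)
= \mathbf{a}^+\hat{\mathbf{a}}^+ \;+\; \mathbf{a}^+\phi^-_0\hat{\mathbf{a}}^- \;+\; \phi^+_0\mathbf{a}^-\hat{\mathbf{a}}^+ \;+\; \phi^+_0\mathbf{a}^-\phi^-_0\hat{\mathbf{a}}^- .
\]
Sandwiched between $\langle 0|$ and $|0\rangle$, each term is a sum over the constituent monomials of $\mathbf{a}^\pm,\hat{\mathbf{a}}^\pm$ of a VEV of a product of operators $\phi^\pm_j$, which by Wick's theorem (\ref{Wick-Pf}) equals the Pfaffian of the matrix of pairwise contractions. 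From (\ref{phi_pairing'}) any $\phi^+$--$\phi^-$ contraction is zero unless both operators are the zero modes $\phi^\pm_0$, and from (\ref{phi_pairing}) a contraction of $\phi^\pm_0$ with a nonzero mode $\phi^\pm_k$, $k\ne 0$, also vanishes. Hence every $\phi^\pm_{\ne 0}$ must be contracted with another $\phi^\pm_{\ne 0}$ of the \emph{same} species.

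With this in hand the three ``cross'' terms die: in $\mathbf{a}^+\phi^-_0\hat{\mathbf{a}}^-$ the lone $\phi^-_0$ has no admissible partner, and likewise for $\phi^+_0\mathbf{a}^-\hat{\mathbf{a}}^+$; while in $\phi^+_0\mathbf{a}^-\phi^-_0\hat{\mathbf{a}}^-$ the operators $\phi^+_0$ and $\phi^-_0$ must pair with each other, after which the \emph{odd} number of $\phi^+_{\ne 0}$'s coming from $\mathbf{a}^-$ (this is exactly where the odd-degree hypothesis on $\mathbf{a}^-$ is used) can no longer be fully paired. In the surviving term $\langle 0|\mathbf{a}^+\hat{\mathbf{a}}^+|0\rangle$ the contraction matrix of each monomial pair is block diagonal — the $\phi^+$--$\phi^-$ block vanishing identically — so its Pfaffian factors as the product of the $\phi^+$-block Pfaffian and the $\phi^-$-block Pfaffian, and summing over monomials yields $\langle 0|\mathbf{a}^+|0\rangle\langle 0|\hat{\mathbf{a}}^+|0\rangle$. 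This proves equality of the first and last members of the asserted chain.

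For the middle member I would use $|1\rangle = \sqrt{2}\,\phi^+_0|0\rangle$ and $\langle 1| = \sqrt{2}\,\langle 0|\phi^+_0$ from (\ref{phi_0_ac_r}) and (\ref{hatphi_0_vac_l}) to write $\langle 1|A\hat A|1\rangle = 2\langle 0|\phi^+_0 A\hat A\phi^+_0|0\rangle$, where $A=\mathbf{a}^++\phi^+_0\mathbf{a}^-$ and $\hat A=\hat{\mathbf{a}}^++\phi^-_0\hat{\mathbf{a}}^-$. I would commute the rightmost $\phi^+_0$ leftward through $\hat A$ (which has even $\phi^-$-degree, and each $\phi^-_j$ anticommutes with $\phi^+_0$ by (\ref{neutral-canonical})), then reduce $\phi^+_0 A\phi^+_0$ using $(\phi^+_0)^2=\tfrac12$ (eq.~(\ref{square_phi_0})) and the stated parities of $\mathbf{a}^\pm$, obtaining $\phi^+_0 A\phi^+_0=\tfrac12(\mathbf{a}^+-\phi^+_0\mathbf{a}^-)$. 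Thus $\langle 1|A\hat A|1\rangle = \langle 0|(\mathbf{a}^+-\phi^+_0\mathbf{a}^-)\hat A|0\rangle$, and the computation of the preceding paragraph shows the $\mathbf{a}^+\hat A$ part contributes $\langle 0|\mathbf{a}^+|0\rangle\langle 0|\hat{\mathbf{a}}^+|0\rangle$ while the $\phi^+_0\mathbf{a}^-\hat A$ part vanishes.

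The delicate part throughout is the zero-mode bookkeeping — confirming that no nonvanishing contraction links a zero mode to a nonzero mode or to the opposite species, which is what makes the three cross terms vanish and the fourth one factor — together with the signs in the parity manipulations, both in reducing $\phi^+_0 A\phi^+_0$ and in the block-factorization of the Pfaffian. A secondary point, worth at most a remark, is that Wick's theorem is being invoked in its general Pfaffian form, valid for products of Clifford-linear elements whose pairwise anticommutators are arbitrary scalars (as happens for $\phi^+_j$ and $\phi^+_{-j}$), rather than only in the literal form (\ref{Wick-Pf}) for mutually anticommuting operators.
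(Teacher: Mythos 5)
Your proof is correct. Note, however, that the paper itself does not prove this lemma at all: it simply states that the factorization property ``is proved in \cite{JM} and \cite{You}'' and moves on, so there is no in-paper argument to compare against. What you have supplied is a self-contained verification, and it holds up: the expansion into four terms, the observation from (\ref{phi_pairing})--(\ref{phi_pairing'}) that the only nonvanishing contraction involving a zero mode or linking the two species is $\langle 0|\phi^+_0\phi^-_0|0\rangle$, the resulting death of the three cross terms (with the odd-degree hypothesis on ${\bf a}^-$, $\hat{\bf a}^-$ correctly doing the work in the fourth term), and the block-Pfaffian factorization of the surviving term are all sound. The reduction of $\langle 1|\cdot|1\rangle$ to $\langle 0|\cdot|0\rangle$ via $|1\rangle=\sqrt{2}\,\phi^+_0|0\rangle$, the parity count giving $\phi^+_0 A\phi^+_0=\tfrac12({\bf a}^+-\phi^+_0{\bf a}^-)$, and the invariance of the final answer under ${\bf a}^-\mapsto -{\bf a}^-$ are likewise correct. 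The one point you rightly flag as delicate is genuine: the Wick theorem as stated in (\ref{Wick-Pf}) is formulated only for mutually anticommuting operators, whereas $\phi^+_j$ and $\phi^+_{-j}$ have anticommutator $(-1)^j$, so you do need the general Pfaffian form of Wick's theorem for arbitrary linear Clifford elements (equivalently, one can normal-order first); since the paper invokes exactly this generality in deriving (\ref{Q+fermi})--(\ref{Q-fermi}), relying on it here is consistent with the paper's conventions.
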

As an immediate corollary, it follows that:

  \begin{lemma}[\bf Factorization]
 \label{factorization_lemma}
If $(u^-_1,\dots,u^+_n)$ and $(u^-_1,\dots,u^-_m)$ are linear combinations of the
operators $\{\phi^+_i\}_{i \in \Zb}$ and  $\{\phi^-_i\}_{i \in \Zb}$ respectively, the VEV of their product can be factorized as:
\be
\langle 0 | u^+_1\cdots u^+_n u^-_1\cdots u^-_m |0\rangle =
\begin{cases}
 \langle 0 | u^+_1\cdots u^+_n  |0\rangle \langle 0|u^-_1\cdots u^-_m |0\rangle 
&  
\text{ if } n \text{ and } m \text{ are even }\\
0  &  \text{ if } n \text{ and }  m\text{ have different  parity }\\ 
2i \langle 0|u^+_1\cdots u^+_n\phi^+_0 |0\rangle \langle 0|u^-_1\cdots u^-_m\phi^-_0|0\rangle &  
\text{  if }  n \text{ and }  m \text{ are odd }.
\end{cases}
\label{factorization}
\ee
In particular, for odd $n$ and $m$,  $\langle 0 |u^+_1\cdots u^+_n \phi^+_0|0\rangle$
and $\langle 0|u^-_1\cdots u^-_m \phi^-_0 |0\rangle $  vanish unless the terms in the products
$u^+_1\cdots u^+_n$ and $u^-_1\cdots u^-_m$  that are linear in $\phi^+_0$ and $\phi^-_0$, respectively, are nonzero.
\end{lemma}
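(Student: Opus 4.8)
\emph{Proof sketch.}
The plan is to reduce all three cases to the factorization identity stated just above (from \cite{JM,You}), using only the relations $(\phi^+_0)^2=(\phi^-_0)^2=\tfrac12$, the fact that $\phi^+_0$ (resp.\ $\phi^-_0$) anticommutes with every $\phi^+_i$ (resp.\ $\phi^-_i$) with $i\neq0$, and the fermionic charge grading $\FF=\bigoplus_k\FF_k$. First I would normal order. Writing $u^+_k=v^+_k+c^+_k\phi^+_0$ with $v^+_k$ a linear combination of $\{\phi^+_i\}_{i\neq0}$ and $c^+_k\in\Cbb$, expanding $u^+_1\cdots u^+_n$ multilinearly and, in each term, gathering the factors $\phi^+_0$ (which only costs signs) so as to apply $(\phi^+_0)^2=\tfrac12$, one obtains
\[
u^+_1\cdots u^+_n=
\begin{cases}
{\bf a}^++\phi^+_0{\bf a}^- & (n\ \text{even}),\\[3pt]
{\bf b}^-+\phi^+_0{\bf b}^+ & (n\ \text{odd}),
\end{cases}
\]
with ${\bf a}^+,{\bf b}^+$ sums of monomials of even degree in $\{\phi^+_i\}_{i\neq0}$ and ${\bf a}^-,{\bf b}^-$ of odd degree, and similarly for $u^-_1\cdots u^-_m$ with hats and $\phi^-_0$ throughout. (Only the VEVs of these reduced elements will be used, so the non-uniqueness of the bookkeeping signs is harmless.)

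If $n+m$ is odd one does not even need this: by (\ref{charged-neutral}) each $\phi^\pm_j$ is a linear combination of the charge-raising operator $\psi_j$ and the charge-lowering operator $\psi^\dag_{-j}$, so each $\phi^\pm_j$ changes the fermionic charge by $\pm1$; as $n+m$ is odd, $u^+_1\cdots u^+_n u^-_1\cdots u^-_m|0\rangle$ has no component in $\FF_0$, whence $\langle0|u^+_1\cdots u^+_n u^-_1\cdots u^-_m|0\rangle=0$, the middle line of (\ref{factorization}). If $n$ and $m$ are both even, the two normal forms are precisely those appearing in the quoted lemma, which therefore gives $\langle0|u^+_1\cdots u^+_n u^-_1\cdots u^-_m|0\rangle=\langle0|{\bf a}^+|0\rangle\langle0|\hat{\bf a}^+|0\rangle$; applying the same lemma once with $\hat{\bf a}^+=1,\hat{\bf a}^-=0$ and once with ${\bf a}^+=1,{\bf a}^-=0$ identifies these with $\langle0|u^+_1\cdots u^+_n|0\rangle$ and $\langle0|u^-_1\cdots u^-_m|0\rangle$, giving the first line of (\ref{factorization}).

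The substantive case is $n$ and $m$ both odd, where I would exploit that $\phi^+_0$ is invertible with $(\phi^+_0)^{-1}=2\phi^+_0$. Put $A:=u^+_1\cdots u^+_n\,\phi^+_0$ and $\hat A:=u^-_1\cdots u^-_m\,\phi^-_0$. From the normal form, $A=({\bf b}^-+\phi^+_0{\bf b}^+)\phi^+_0=-\phi^+_0{\bf b}^-+\tfrac12{\bf b}^+$ (moving the right-hand $\phi^+_0$ left through ${\bf b}^-$, costing $(-1)^{\deg{\bf b}^-}=-1$, through ${\bf b}^+$ with sign $+1$, then using $(\phi^+_0)^2=\tfrac12$); thus $A=\tfrac12{\bf b}^++\phi^+_0(-{\bf b}^-)$ has the ``even $+\,\phi^+_0\cdot$odd'' form of the quoted lemma, and likewise $\hat A=\tfrac12\hat{\bf b}^++\phi^-_0(-\hat{\bf b}^-)$. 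Inverting, $u^+_1\cdots u^+_n=2A\phi^+_0$ and $u^-_1\cdots u^-_m=2\hat A\phi^-_0$, so
\[
\langle0|u^+_1\cdots u^+_n\,u^-_1\cdots u^-_m|0\rangle=4\langle0|A\phi^+_0\hat A\phi^-_0|0\rangle=4\langle0|A\hat A\phi^+_0\phi^-_0|0\rangle,
\]
the last equality because $\hat A$ contains no $\phi^+$ and has even degree in the $\phi^-_i$, so $\phi^+_0$ commutes through it. Since $\phi^+_0\phi^-_0|0\rangle=\tfrac i2|0\rangle$ (immediate from (\ref{square_phi_0}) and (\ref{phi_0_ac_r})), the right-hand side is $2i\langle0|A\hat A|0\rangle$; the quoted lemma then yields $\langle0|A\hat A|0\rangle=\langle0|\tfrac12{\bf b}^+|0\rangle\langle0|\tfrac12\hat{\bf b}^+|0\rangle$, and with a trivial second (resp.\ first) factor $\langle0|A|0\rangle=\tfrac12\langle0|{\bf b}^+|0\rangle$, $\langle0|\hat A|0\rangle=\tfrac12\langle0|\hat{\bf b}^+|0\rangle$. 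Hence $\langle0|u^+_1\cdots u^+_n\,u^-_1\cdots u^-_m|0\rangle=\tfrac i2\langle0|{\bf b}^+|0\rangle\langle0|\hat{\bf b}^+|0\rangle=2i\langle0|A|0\rangle\langle0|\hat A|0\rangle$, which is the third line of (\ref{factorization}) since $A=u^+_1\cdots u^+_n\phi^+_0$ and $\hat A=u^-_1\cdots u^-_m\phi^-_0$; and because $\langle0|A|0\rangle=\tfrac12\langle0|{\bf b}^+|0\rangle$, this VEV vanishes whenever ${\bf b}^+=0$, i.e.\ whenever the part of $u^+_1\cdots u^+_n$ linear in $\phi^+_0$ vanishes --- the ``in particular'' clause.

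The step I expect to require the most care is the sign bookkeeping in the both-odd case: one must not commute $\phi^+_0\phi^-_0$ past $A$ (which itself contains a $\phi^+_0$, so the naive anticommutation rule fails), but instead let $\phi^+_0\phi^-_0$ act directly on $|0\rangle$; and the advertised constant $2i$ emerges only because the factor $4$ from inverting the two zero modes, the factor $\tfrac i2$ from $\phi^+_0\phi^-_0|0\rangle$, and the two factors $\tfrac12$ hidden in $A$ and $\hat A$ combine correctly. Once the quoted lemma and the charge grading are in hand, the even--even factorization and the opposite-parity vanishing are essentially immediate.
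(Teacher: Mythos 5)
Your argument is correct and is exactly the route the paper intends: Lemma \ref{factorization_lemma} is stated there as an immediate corollary of the preceding factorization lemma from \cite{JM,You}, and your normal-ordering into ${\bf a}^+ + \phi^+_0{\bf a}^-$ (resp.\ ${\bf b}^- + \phi^+_0{\bf b}^+$), the charge-grading argument for the mixed-parity case, and the zero-mode bookkeeping yielding the factor $2i$ are precisely the details the paper leaves unwritten. No gaps; the sign and constant accounting in the odd--odd case checks out against eqs.~(\ref{square_phi_0}), (\ref{phi_0_ac_r}) and (\ref{phi_pairing'}).
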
 

\section{Schur functions and Schur $Q$-functions via fermions.}
\label{schur_schur_Q_fermions}

\subsection{Fermionic representation of KP flows and Schur functions.}
\label{schur_fermions}

Let $\tb=(t_1,t_2,t_3,\cdots )$ denote the infinite sequence of KP flow parameters, and define the abelian
group $\Gamma_+$ of KP flows  $\{\gamma_+(\tb) := e^{\sum_{i=1}^\infty t_i \Lambda^i}\}$,  where
$\Lambda \in \End(\HH)$ is the upward shift element
\be
\Lambda(e_i) = e_{i-1}.
\ee
These act on $\FF$ via the fermionic representation
\be
\hat{\gamma}_+(\tb) := e^{\sum_{n=1}^\infty J_n t_n }.
\ee
and, by (\ref{J_n_vac_0}),  they stabilize the vacuum element
 \be
 \hat{\gamma}_+(\tb)| 0 \rangle = |0 \rangle.
 \ee
 
  We  define  $\psi_j(\tb)$, $\psi^\dag_j(\tb)$ to be the conjugates of $\psi_j$  and $\psi^\dag_j$
  by $\hat{\gamma}_+(\tb) $.
\be
 \psi_j(\tb):=\hat{\gamma}_+(\tb) \psi_j (\hat{\gamma}_+(\tb))^{-1},\quad
 \psi^\dag_j(\tb):=\hat{\gamma}_+(\tb) \psi^\dag_j (\hat{\gamma}_+(\tb))^{-1}.
\ee
It follows that these satisfy the same anticommutation relations (\ref{charged-canonical}) as $\{\psi_j, \psi^\dag_j\}_{j\in \Zb} $:
\be
[\psi_j(\tb),\psi_k(\tb)]_+= [\psi^\dag_j(\tb),\psi^\dag_k(\tb)]_+=0,\quad [\psi(\tb)_j,\psi^\dag_k(\tb)]_+=\delta_{jk}.
\label{charged-canonical-dressed}
\ee

Let  $ (\alpha|\beta)$ be the Frobenius notation for a partition $\lambda$,
with  $\alpha=(\alpha_1,\dots,\alpha_r)$ and $\beta=(\beta_1,\dots,\beta_r)$ the ``arm'' and ``leg''
lengths, respectively, along the principal diagonal of the corresponding Young diagram, and $r$ 
 the Frobenius {\em rank} (i.e., the number of elements on the principal diagonal).
Viewed as functions of the normalized power sum symmetric functions \cite{Mac} in some auxiliary set of
variables $\{x_a\}_{a\in \Nb}$ 
\be
t_i :=\tfrac{1}{i}  p_i = \tfrac{1}{i}  \sum_{a=1}^\infty x_a^i,
\label{t_i_sum_x_a}
\ee
we may express the corresponding Schur function $s_{(\alpha|\beta)}(\tb)$ \cite{Mac},
viewed as functions of the normalized power sums $\tb$, defined in (\ref{t_i_sum_x_a}),
as the following vacuum state expectation values  \cite{Sa, JM}
\bea
s_{(\alpha|\beta)}(\tb) &\& =  (-1)^{\sum_{j=1}^r\beta_j}(-1)^{\tfrac{1}{2}r(r-1)}
 \langle 0|\psi_{\alpha_1}(\tb)\cdots \psi_{\alpha_r}(\tb)
 \psi_{-\beta_1-1}^\dag(\tb)\cdots \psi_{-\beta_r-1}^\dag(\tb)  |0\rangle \cr
&\&= (-1)^{\sum_{j=1}^r\beta_j} \langle 0|\prod_{j=1}^r 
 \left( \psi_{\alpha_j}(\tb) \psi_{-\beta_j-1}^\dag(\tb) \right) |0\rangle ,
  \label{schur_fermionic}
\eea
where,  by (\ref{charged-canonical-dressed}),  
\be
[\psi_{\alpha_k}(\tb), \psi_{-\beta_j-1}^\dag(\tb)]_+= 0, \quad   \forall j,k\in  \Zb.  
\ee

 \br{\bf Giambelli identity.}
 \label{Giambelli} 
Applying Wick's theorem (\ref{Wick-det}) to the right hand side of (\ref{schur_fermionic})
 gives the Giambelli identity \cite{Mac} 
 \bea
s_{(\alpha|\beta)}({\bf t}) &\&= 
\det\left((-1)^{\beta_j} \langle 0| \psi_{\alpha_k}(\tb)\psi_{-\beta_j-1}^\dag(\tb) |0\rangle  \right)_{1\le j, k, \le r}\cr
&\& \cr
&\& = \det\left(s_{(\alpha_j|\beta_k)}({\bf t}) \right)_{1\le j,k \le r}, 
\eea
expressing  $s_{(\alpha|\beta)}({\bf t})$ as the determinant of the $r \times r$
matrix formed from the hook partition Schur functions for each pair of Frobenius indices.
\er

\subsection{Fermionic representation of BKP flows and Schur $Q$-functions}
\label{schur_Q_fermions}

Denoting  the set of odd flow variables as 
\be
\tb_B=(t_1,t_3,t_5,\dots),
\ee
these may be viewed as determining a subset  $\{\tb'\} $ of the $\tb$'s , where
\be
\tb':=(t_1,0,t_3,0,t_5,\dots).
\ee
Following \cite{DJKM1, DJKM2,  You}, we  define  two mutually commuting abelian groups of BKP flows 
\hbox{$\Gamma^{B+} = \{\gamma^{B+}({\tb}_{B+})\}$} and 
\hbox{$\Gamma^{B-} = \{\gamma^{B-}({\tb}_B)\}$}, 
with Clifford representations
\be
\hat{\gamma}^{B+}({\tb}_B) :=  e^{\sum_{p=0,}^\infty J^{B+}_{2p-1} t_{2p-1} } ,
 \quad \hat{\gamma}^{B-}({\tb}_B) := e^{\sum_{q=1}^\infty J^{B-}_{2q-1} t_{2q-1} }, 
\ee
and note that, by (\ref{J_B_vac}), these  stabilize both the vacua $|0\rangle$ and $| 1\rangle$
\bea
\hat{\gamma}^{B+}({\tb}_B)|0\rangle &\&= |0\rangle, \quad \gamma^{B+}({\tb}_B)|1\rangle = |1\rangle, \cr
\hat{\gamma}^{B-}({\tb}_B)|0\rangle &\&= |0\rangle, \quad \hat{\gamma}^{B-}({\tb}_B)|1\rangle = |1\rangle.
\eea
Defining
\be
 \phi^+_j(\tb_B):=\hat{\gamma}^{B+}(\tb_B) \phi^+_j (\hat{\gamma}^{B+}(\tb_B))^{-1}, \quad
\phi^-_j(\tb_B):=\hat{\gamma}^{B-}({\tb}_B) \phi^-_j ((\hat{\gamma}^{B-}({\tb}_B) )^{-1},
\ee
it follows that these satisfy the same anticommutation relations (\ref{neutral-canonical}) as $\{\phi^+_j, \phi^-_k\}_{j,k \in \Zb}$:
\be
 [\phi^+_j(\tb_B),\phi^-_k(\tb_B)]_+=0,\quad [\phi^+_j(\tb_B),\phi^+_k(\tb_B)]_+ =
 [\phi^-_j(\tb_B),\phi^-_k(\tb_B)]_+ =(-1)^j \delta_{j+k,0}.
 \label{neutral-canonical-dressed}
\ee
From eq.~(\ref{J_gamma_gamma_n}), we have
\begin{lemma}
\be
\psi_j(\tb') = \frac{\phi^+_j(\tb_B)-i\phi^-_j(\tb_B)}{\sqrt 2}, 
\quad
\psi^\dag_{-j}(\tb') =  (-1)^j\frac{\phi^+_j(\tb_B)+i\phi^-_j(\tb_B)}{\sqrt 2} .
\label{f-dag-b-dressed'}
\ee
\end{lemma}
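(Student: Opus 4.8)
The plan is to obtain the dressed relations \eqref{f-dag-b-dressed'} simply by conjugating the undressed ones \eqref{psi_j_phI_j} by the KP flow operator $\hat\gamma_+(\tb')$, after identifying how $\hat\gamma_+(\tb')$ decomposes into the two neutral (BKP) flow operators. Since $\tb'=(t_1,0,t_3,0,\dots)$ has only odd entries, $\hat\gamma_+(\tb') = e^{\sum_{p=1}^\infty J_{2p-1}\,t_{2p-1}}$, and by the Lemma in \eqref{J_gamma_gamma_n} each current splits as $J_{2p-1}=J^{B+}_{2p-1}+J^{B-}_{2p-1}$. Because the $\{J^{B+}_{2p-1}\}$ and $\{J^{B-}_{2q-1}\}$ all mutually commute (the commutator relations recorded just after \eqref{J_pq_B_def}), the exponent $\sum_p J_{2p-1}t_{2p-1}$ is a sum of the two commuting operators $\sum_p J^{B+}_{2p-1}t_{2p-1}$ and $\sum_p J^{B-}_{2p-1}t_{2p-1}$, so the exponential factors:
\be
\hat\gamma_+(\tb') = \hat\gamma^{B+}(\tb_B)\,\hat\gamma^{B-}(\tb_B) = \hat\gamma^{B-}(\tb_B)\,\hat\gamma^{B+}(\tb_B).
\ee

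Next I would use two elementary commutation facts. Since $J^{B+}_n$ is quadratic in the $\{\phi^+_i\}$ and $J^{B-}_n$ is quadratic in the $\{\phi^-_i\}$, while $[\phi^+_i,\phi^-_k]_+=0$, moving a $\phi^-_k$ (resp.\ $\phi^+_k$) past $J^{B+}_n$ (resp.\ $J^{B-}_n$) costs an even number of sign flips; hence $\hat\gamma^{B-}(\tb_B)$ commutes with every $\phi^+_j$ and $\hat\gamma^{B+}(\tb_B)$ commutes with every $\phi^-_j$. Combining this with the factorization above and the definitions $\phi^\pm_j(\tb_B)=\hat\gamma^{B\pm}(\tb_B)\phi^\pm_j(\hat\gamma^{B\pm}(\tb_B))^{-1}$, one computes
\bea
\hat\gamma_+(\tb')\,\phi^+_j\,\hat\gamma_+(\tb')^{-1} &\&= \hat\gamma^{B+}(\tb_B)\hat\gamma^{B-}(\tb_B)\,\phi^+_j\,\hat\gamma^{B-}(\tb_B)^{-1}\hat\gamma^{B+}(\tb_B)^{-1} = \hat\gamma^{B+}(\tb_B)\,\phi^+_j\,\hat\gamma^{B+}(\tb_B)^{-1} = \phi^+_j(\tb_B), \cr
\hat\gamma_+(\tb')\,\phi^-_j\,\hat\gamma_+(\tb')^{-1} &\&= \hat\gamma^{B-}(\tb_B)\hat\gamma^{B+}(\tb_B)\,\phi^-_j\,\hat\gamma^{B+}(\tb_B)^{-1}\hat\gamma^{B-}(\tb_B)^{-1} = \hat\gamma^{B-}(\tb_B)\,\phi^-_j\,\hat\gamma^{B-}(\tb_B)^{-1} = \phi^-_j(\tb_B).
\eea
Applying the automorphism $X\mapsto \hat\gamma_+(\tb')\,X\,\hat\gamma_+(\tb')^{-1}$ to both sides of \eqref{psi_j_phI_j} and using linearity then yields exactly \eqref{f-dag-b-dressed'}, for $\psi_j(\tb')$ and for $\psi^\dag_{-j}(\tb')$ at once.

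The only delicate point — and it is a mild one — is the manipulation of the infinite exponential: the interchange of $e^{\sum_p J_{2p-1}t_{2p-1}}$ with the splitting $J_{2p-1}=J^{B+}_{2p-1}+J^{B-}_{2p-1}$ and the factorization into $\hat\gamma^{B+}\hat\gamma^{B-}$ are to be read formally, in the same sense in which $\hat\gamma_+(\tb)$ and its conjugation action on the $\psi_j$ are used earlier in the paper; the commutativity of the full family $\{J^{B\pm}_{2p-1}\}$ makes the factorization unambiguous, so no Baker--Campbell--Hausdorff corrections arise. Everything else is just linearity of conjugation and the two parity observations above, so no further obstacle is expected.
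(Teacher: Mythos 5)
Your proposal is correct and is essentially the argument the paper intends: the lemma is stated in the paper with no explicit proof beyond the citation of the current decomposition $J_{2p-1}=J^{B+}_{2p-1}+J^{B-}_{2p-1}$, and your factorization $\hat\gamma_+(\tb')=\hat\gamma^{B+}(\tb_B)\hat\gamma^{B-}(\tb_B)$ together with the observation that each neutral flow commutes with the opposite family of $\phi$'s (being quadratic in anticommuting operators) is precisely the omitted detail. Nothing further is needed.
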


\br
\label{undressed-vs-dressed}
It follows from eq.~(\ref{J_B_vac}) that (\ref{phi_vac_r}) - (\ref{phi_0_ac_r}) are still valid if $\phi^+_0$ and $\phi^-_0$
are replaced by $\phi^+_0({\tb }_B)$ and $\phi^-_0({\tb }_B)$, so we have 
\be
\phi^+_0({\tb }_B)|0\rangle = \phi^+_0|0\rangle = \tfrac{1}{2} |1\rangle, \quad \phi^-_0({\tb }_B)|0\rangle 
= \phi^-_0| 0\rangle = \tfrac{i}{2} |1\rangle.
\label{phi_0_t_vac}
\ee
\er

From (\ref{f-dag-b-dressed'}), it follows that (\ref{schur_fermionic}) may equivalently be expressed as:
\begin{lemma}
\label{key-Lemma}
\bea
s_{(\alpha| \beta)} ({\bf t}')&\&=(-1)^{\tfrac{1}{2}r(r+1)}
2^{-r}  \langle 0|  
  \left(\phi^+_{\alpha_1}(\tb_B)-i\phi^-_{\alpha_1}(\tb_B)\right)\cdots 
\left(\phi^+_{\alpha_r}(\tb_B)-i\phi^-_{\alpha_r}(\tb_B)\right)
 \cr
&\& 
{\hskip 12 pt}\times 
\left(\phi^+_{\beta_1+1}(\tb_B)+i\phi^-_{\beta_1+1}(\tb_B)\right)\cdots 
 \left(\phi^+_{\beta_r+1}(\tb_B)+i\phi^-_{\beta_r+1}(\tb_B)\right)  |0\rangle. 
\label{s_alpha_beta_neutral_fermion}
\eea
\end{lemma}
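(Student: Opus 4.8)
The plan is to start from the charged-fermion VEV formula (\ref{schur_fermionic}), evaluated at the odd-time locus $\tb = \tb'$, and simply substitute the dressed change-of-basis relations (\ref{f-dag-b-dressed'}) for each $\psi_{\alpha_j}(\tb')$ and each $\psi^\dag_{-\beta_j-1}(\tb')$. Concretely, (\ref{schur_fermionic}) reads $s_{(\alpha|\beta)}(\tb') = (-1)^{\sum_j \beta_j}(-1)^{\frac12 r(r-1)} \langle 0 | \psi_{\alpha_1}(\tb')\cdots \psi_{\alpha_r}(\tb')\, \psi^\dag_{-\beta_1-1}(\tb')\cdots \psi^\dag_{-\beta_r-1}(\tb') |0\rangle$. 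For the annihilation operators I would write $\psi^\dag_{-\beta_j-1}(\tb')$ by applying (\ref{f-dag-b-dressed'}) with the index $j \rightsquigarrow \beta_j+1$: this gives $\psi^\dag_{-(\beta_j+1)}(\tb') = (-1)^{\beta_j+1}\frac{1}{\sqrt2}\bigl(\phi^+_{\beta_j+1}(\tb_B)+i\phi^-_{\beta_j+1}(\tb_B)\bigr)$. Thus the creation factors contribute $r$ copies of $\frac{1}{\sqrt2}$ and the annihilation factors contribute $r$ copies of $\frac{1}{\sqrt2}$, giving the overall $2^{-r}$, together with a sign factor $\prod_{j=1}^r (-1)^{\beta_j+1} = (-1)^{\sum_j \beta_j}(-1)^r$ pulled out of the annihilation operators.

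Next I would collect the scalar prefactors: the existing $(-1)^{\sum_j\beta_j}$ from (\ref{schur_fermionic}) cancels against the $(-1)^{\sum_j\beta_j}$ produced by the substitution, leaving $(-1)^{\frac12 r(r-1)}(-1)^r = (-1)^{\frac12 r(r-1) + r} = (-1)^{\frac12 r(r+1)}$, which is exactly the sign asserted in (\ref{s_alpha_beta_neutral_fermion}). This is the one genuinely fiddly point — keeping the parity bookkeeping straight between the two forms of (\ref{schur_fermionic}) and the per-operator sign $(-1)^{\beta_j+1}$ from (\ref{f-dag-b-dressed'}) — but it is purely arithmetic. One should also note that no reordering of operators is needed: the $\psi_{\alpha_j}(\tb')$ and $\psi^\dag_{-\beta_j-1}(\tb')$ all pairwise anticommute by (\ref{charged-canonical-dressed}), and after substitution the neutral combinations inherit a compatible (anti)commutativity from (\ref{neutral-canonical-dressed}), so the product may be written in the stated order with the stated factors pulled straight through.

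I would then remark that the identity is to be read as an identity of formal expressions: the left-hand side (\ref{schur_fermionic}) is literally equal, term by term, to the right-hand side after substituting the linear relations (\ref{f-dag-b-dressed'}) into each of the $2r$ factors and extracting scalars, so the content of Lemma \ref{key-Lemma} is really just the observation that (\ref{schur_fermionic}) can be rewritten in neutral-fermion language at $\tb = \tb'$. The only thing that must be checked is that evaluating the Schur function at the odd-time locus $\tb'$ is consistent with dressing the charged fermions by $\hat\gamma_+(\tb')$ rather than the full $\hat\gamma_+(\tb)$ — but this is immediate, since $\psi_j(\tb') := \hat\gamma_+(\tb')\psi_j\hat\gamma_+(\tb')^{-1}$ by definition and $s_{(\alpha|\beta)}(\tb')$ just means $s_{(\alpha|\beta)}(\tb)$ with the even times set to zero.

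The main obstacle, such as it is, is getting the sign exponent right; there is no analytic or structural difficulty, and the proof is a two-line computation once the substitution (\ref{f-dag-b-dressed'}) is plugged in and the prefactors are tallied. I would present it as: substitute, count the $2^{-r}$, cancel the $\beta$-signs, combine the remaining signs to $(-1)^{\frac12 r(r+1)}$, done.
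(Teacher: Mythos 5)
Your proof is correct and is exactly the argument the paper intends: Lemma \ref{key-Lemma} is stated in the paper as an immediate consequence of substituting (\ref{f-dag-b-dressed'}) into the first line of (\ref{schur_fermionic}), with no explicit proof given, and your sign bookkeeping $(-1)^{\sum_j\beta_j}\cdot(-1)^{\sum_j\beta_j+r}\cdot(-1)^{\frac12 r(r-1)}=(-1)^{\frac12 r(r+1)}$ together with the $2^{-r}$ from the $2r$ factors of $1/\sqrt{2}$ checks out. Nothing further is needed.
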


To define Schur's $Q$-functions $Q_\alpha$ we  begin, following \cite{Mac},  by defining an infinite skew 
symmetric matrix $(Q_{ij})_{i, j \in \Nb}$, whose entries are symmetric functions of the infinite sequence 
of indeterminates ${\bf x} =(x_1, x_2, \dots )$, via the following formula:
\be
Q_{ij}({\bf x}) := 
\begin{cases}
q_i({\bf x}) q_j({\bf x}) + 2\sum_{k=1}^j (-1)^kq_{i+k}({\bf x}) q_{j-k}({\bf x}) \quad \text{if } (i,j) \neq (0,0), \\
0  \quad \text{if } (i,j) = (0,0) ,
\end{cases} 
\ee
where the $q_i({\bf x})$'s are defined by the generating function:
\be
\prod_{i=1}^\infty {1+ z x_i \over 1 - z x_i}= \sum_{i=0}^\infty z^i q_i({\bf x}).
\ee
In particular
\be
Q_{(j,0)}\left({\bf x}\right) = -Q_{(0,j)}\left({\bf x}\right)=q_j\left({\bf x}\right) \ \text{ for } j\ge 1.
\ee

For a strict partition $\alpha$ of even cardinality $r$ (including a possible zero part $\alpha_r=0)$, 
let $\Mb_{\alpha}({\bf x})$ denote the $r\times r$ skew symmetric  matrix with entries
\be
\left(\Mb_{\alpha}({\bf x})\right)_{ij} := Q_{\alpha_i \alpha_j}({\bf x}), \quad 1\le i, j \le r.
\ee    
The Schur $Q$-function is defined as its Pfaffian \cite{Mac}
\be
Q_\alpha({\bf x}):= \Pf(\Mb_{\alpha}({\bf x}))
\label{Q+_pfaff}
\ee
and, for completeness, 
\be
Q_{\emptyset} :=1.
\ee

Equivalently,  these may be viewed as functions  of the  odd (normalized) power sum symmetric functions 
${\bf t}_B=(t_1, t_3, \dots)$
\be
t_{2i-1} := {1\over 2i-1}p_{2i-1}({\bf x}) = {1\over 2i-1} \sum_{a=1}^\infty x_a^{2i-1},  \quad i =1,2 , \dots.
\ee
which we denote
\be
 \tilde{q}_j({\bf t}_B):= q_j(\xb), \quad  \tilde{Q}_{ij}({\bf t}_B) :=Q_{ij}(\xb).
\ee
We then have the fermionic VEV formulae  \cite{DJKM1, You, NimO}:
\bea
 \tilde{Q}_\alpha\left(\tfrac{1}{2}\tb_B\right) &\& =2^{r\over 2}\langle 0| \phi^+_{\alpha_1}(\tb_B)\cdots \phi^+_{\alpha_{r}}(\tb_B) |0\rangle,
 \label{Q+fermi}
\\
 &\&= 2^{r\over 2}
 \langle 0| \phi^-_{\alpha_1}(\tb_B)\cdots \phi^-_{\alpha_{r}}(\tb_B) |0\rangle,
 \label{Q-fermi}
\eea
which follow from  the Pfaffian form (\ref{Wick-Pf}) of Wick's theorem.

\subsection{Examples:  ``doubles'', hook partitions and an $r=2$ case}
\label{example_doubles}

Consider a set of Frobenius indices
\be
\alpha=(\alpha_1,\dots,\alpha_{r}),
\ee
with $\alpha_i > \alpha_{i+1}, \ \alpha_r \ge 0$  (or strict partitions, with $\alpha_r=0$ allowed as a part).
Following \cite{Mac}, let $\DP$ denote the set of strict partitions (with all parts $\ge 1$).
Associated to $\alpha$ we define the strict partition
\be
I(\alpha)= (I_1(\alpha), \dots, I_r(\alpha)) \in \DP
\ee
whose parts are obtained form the $\alpha_i$'s by shifting upward by $1$:
\be
I_i(\alpha) := \alpha_i + 1.
\label{I_alpha}
\ee
If a partition $\lambda$ is related to a strict partition $I =(I_1, \dots I_r)$ in such a way that
its Frobenius indices are
 \be
 \lambda = (I_1, \dots I_r |I_1 -1, \dots I_r-1)) ,
 \ee
 it is called \cite{Mac} the {\em double of} $I$, and denoted $\lambda =D(I)$.

 \begin{example}
 \label{example3.1}
It is known (see \cite{Mac}, Chapter  III, Section 8, example 10 (b)) that
 \be
s_{D(\alpha)}(\tb') =
\begin{cases}
 2^{-r}\left(\tilde{Q}_{\alpha}\left(\tfrac12 \tb_B\right)  \right)^2 \ \text{ if } r \ \text{is even},\cr
 2^{-r}\left(\tilde{Q}_{(\alpha, 0)} \left(\tfrac12 \tb_B\right)  \right)^2 \ \text{ if } r \ \text{is odd}, 
 \end{cases} 
 \label{schur_D_alpha_Q}
 \ee
 where $D(\alpha)$  is the double of $\alpha$.
To prove this using fermionic VEV's, we use:
\bea
&\&(-1)^{\beta_j+1}  \psi_{\alpha_j}(\tb') \psi_{-\beta_j-1}^\dag(\tb')  = -\frac12
\left(\phi^+_{\alpha_j}(\tb_B)-i\phi^-_{\alpha_j}(\tb_B)\right) 
\left( \phi^+_{\beta_j+1}(\tb_B)+i\phi^-_{\beta_j+1}(\tb_B)\right) 
\cr
  =  &\&\frac12 \left(\phi^+_{\beta_j+1}(\tb_B)\phi^+_{\alpha_j}(\tb_B) +  \phi^-_{\beta_j+1}(\tb_B)\phi^-_{\alpha_j}(\tb_B)\right)
 +\frac i2 \left(
     \phi^-_{\beta_j+1}(\tb_B)\phi^+_{\alpha_j}(\tb_B)
    -  \phi^+_{\beta_j+1}(\tb_B)\phi^-_{\alpha_j}(\tb_B) \right),
    \cr
    &\&
    \label{psidag_psi_phihat_phi}
\eea
so for $\alpha_j=\beta_j+1$, we have 
\be
(-1)^{\alpha_j} \psi_{\alpha_j}(\tb')  \psi_{-\alpha_j}^\dag(\tb')  
=  - i  \phi^+_{\alpha_j}(\tb_B)  \phi^-_{\alpha_j}(\tb_B).
 \label{psidag_psi_alpha_phihat_phi}
\ee
From eqs.~(\ref{schur_fermionic}) and (\ref{psidag_psi_alpha_phihat_phi}) it follows  that
\bea
s_{D(\alpha)}(\tb') &\&=  (-i)^{r} \langle 0|  \left( \prod_{m=1}^r  
\phi^+_{\alpha_j}(\tb_B) 
\phi^-_{\alpha_j}(\tb_B) \right)
   |0\rangle \cr
   &\& =
   (i)^r (-1)^{\tfrac{1}{2}r(r+1)}\langle 0| \left(\phi^+_{\alpha_1}(\tb_B)\cdots \phi^+_{\alpha_r}(\tb_B) \phi^-_{\alpha_1}(\tb_B)\cdots \phi^-_{\alpha_r}(\tb_B) \right)  |0\rangle .
   \label{schur_double_fermi}
   \eea
Applying  Lemma \ref{factorization_lemma}, eqs.~(\ref{Q+fermi}) and (\ref{Q-fermi})
gives (\ref{schur_D_alpha_Q}).
\end{example}

\begin{example}
\label{example3.2}
It is also known (see \cite{Mac}, Chapter  III, Section 8, example 10 (a, iv) ) that
\bea
 s_{(j | k)} (\tb') &\&=  \tfrac{1}{2}\left( \tilde{Q}_{j}(\tfrac12\tb_B)  \tilde{Q}_{k+1}(\tfrac12\tb_B) -  \tilde{Q}_{j,k+1}(\tfrac12\tb_B)\right) \cr
 &\& \cr
  &\&= \tfrac{1}{2}\left( \tilde{Q}_{(j,0)}(\tfrac12\tb_B)  \tilde{Q}_{(k+1, 0)}(\tfrac12\tb_B) -  \tilde{Q}_{j,k+1}(\tfrac12\tb_B)\tilde{Q}_{\emptyset}(\tfrac12\tb_B)\right).
 \label{schur_i_j_Q}
 \eea
To derive this fermionically, we apply (\ref{schur_fermionic}), which gives:
\bea
s_{(j|k)}({\bf t}') &\&= 
- \tfrac{1}{2}\langle 0 | (\phi^+_j({\bf t}_B) - i \phi^-_j({\bf t}_B))
 (\phi^+_{k+1}({\bf t}_B) + i \phi^-_{k+1}({\bf t}_B)  | 0\rangle  \cr
&\& = \tfrac{1}{2}\langle 0 | \phi^+_{k+1}({\bf t}_B) \phi^+_j ({\bf t}_B)| 0 \rangle 
+ \tfrac{1}{2}\langle 0 | \phi^-_{k+1}({\bf t}_B) \phi^-_j({\bf t}_B) | 0 \rangle \cr
&\& \quad+  \tfrac{i}{2}\langle 0 | \phi^-_{k+1}({\bf t}_B) \phi^+_j | 0 \rangle - \tfrac{i}{2}\langle 0 | \phi^+_{k+1} \phi^-_j({\bf t}_B) | 0 \rangle \cr&\& = \tfrac{1}{2}\langle 0 | \phi^+_{k+1}({\bf t}_B)\phi^+_j ({\bf t}_B)| 0 \rangle
 + \tfrac{1}{2}\langle 0 | \phi^-_{k+1}({\bf t}_B)\phi^-_j({\bf t}_B) | 0 \rangle \cr
&\& \quad+ \langle 0 | \phi^-_{k+1}({\bf t}_B)\phi^-_0({\bf t}_B)|0 \rangle \langle  0 |\phi^+_j ({\bf t}_B)\phi^+_0({\bf t}_B)| 0 \rangle
 +\langle 0 | \phi^+_{k+1}({\bf t}_B) \phi^+_0({\bf t}_B)|0 \rangle \langle 0 | \phi^-_j({\bf t}_B)\phi^-_0({\bf t}_B)| 0 \rangle \cr
&\& =  \tfrac{1}{2}\left( \tilde{Q}_{(j,0)}(\tfrac12\tb_B)  \tilde{Q}_{(k+1, 0)}(\tfrac12\tb_B) -  \tilde{Q}_{j,k+1}(\tfrac12\tb_B) \tilde{Q}_{\emptyset}(\tfrac12\tb_B)\right),
\eea
where  Lemma \ref{factorization_lemma} has been used in the third equality,  and eqs.~(\ref{phi_0_t_vac}),
(\ref{Q+fermi}) and (\ref{Q-fermi}) in the last.
\end{example}

\begin{example}
\label{example3.3}
Consider a partition  $\lambda =(\alpha_1, \alpha_2 | \beta_1, \alpha_2-1 )$  of Frobenius rank $r=2$
 in which $\alpha_1> \beta_1+1 >\alpha_2$ and   $\beta_2 = \alpha_2-1$. Expanding the RHS of eq.~(\ref{s_alpha_beta_neutral_fermion})
 for this case,  collecting together the four types of terms:
 \bea
&\& \langle 0|\phi^+_{\alpha_1}({\bf t}_B)\phi^+_{\alpha_2}({\bf t}_B)\phi^-_{\beta_1+1}({\bf t}_B)\phi^-_{\alpha_2}({\bf t}_B) |0\rangle, \quad
\langle 0| \phi^+_{\alpha_1}({\bf t}_B)\phi^-_{\alpha_2}({\bf t}_B)\phi^-_{\beta_1+1}({\bf t}_B)\phi^-_{\alpha_2}({\bf t}_B) |0\rangle, \cr
 &\&\langle 0|\phi^+_{\alpha_1}({\bf t}_B)\phi^+_{\alpha_2}({\bf t}_B)\phi^-_{\beta_1+1}({\bf t}_B)\phi^+_{\alpha_2}({\bf t}_B) |0\rangle,  \quad
\langle 0|  \phi^+_{\alpha_1}({\bf t}_B)\phi^-_{\alpha_2}({\bf t}_B)\phi^-_{\beta_1+1}({\bf t}_B)\phi^+_{\alpha_2}({\bf t}_B) |0\rangle, \cr
&\&
\eea
 applying  Lemma \ref{factorization_lemma} and using eqs.~(\ref{Q+fermi}) and (\ref{Q-fermi}) gives
\be
 s_{(\alpha_1,\alpha_2|\beta_1,\alpha_2-1)}({\bf t}')  
 =\frac14  \tilde{Q}_{(\alpha_1,\alpha_2)}(\tfrac12{\bf t}_B) \tilde{Q}_{(\beta_1+1,\alpha_2)}(\tfrac12{\bf t}_B)
  -\frac14 \tilde{Q}_{(\alpha_1,\beta_1+1,\alpha_2, 0)}(\tfrac12{\bf t}_B) \tilde{Q}_{(\alpha_2, 0)}(\tfrac12{\bf t}_B).
\label{s_r2_s1}
\ee
\end{example}

Following some preparatory definitions in Section \ref{polariz_binary},  Theorem \ref{S_QQ},  Section \ref{schur_Q_schur_bilin}   
provides a generalization of these  identities expressing $s_{(\alpha , \beta)}(\tb')$, 
for arbitrary partitions $(\alpha | \beta)$,  as sums over products of Schur $Q$-functions.

\section{Polarizations and binary markings}
\label{polariz_binary}

Let $(\alpha | \beta)$ be a partition of Frobenius rank $r$.
Denote the  union and intersection of $\alpha$ with $I(\beta)$ as
\be
S := \alpha \cap I(\beta), \quad T := \alpha \cup I(\beta),
\label{S_T_def}
\ee
and their cardinalities  (or lengths, when viewed as strict partitions) as
\be
s :=\#(S) = \ell(S), \quad t:= \#(T) = \ell(T) = 2r-s.
\ee

\begin{definition}{{\bf Polarizations.}}
 \label{polarization}
For any partition $(\alpha|\beta)$ a {\em polarization} is a pair of strict partitions 
\be
\mu :=(\mu^+, \mu^-)
\ee  
(with $0$ 's allowed as parts), such that the following conditions are satisfied
\be
\mu^+\cap \mu^- = S  =\alpha \cap I(\beta), \quad \mu^+\cup \mu^- = T =\alpha \cup I(\beta).
\label{polariz_S_T}
\ee
\end{definition}
Let  $\PP(\alpha,\beta)$ denote the set of all polarizations corresponding to a partition $(\alpha|\beta)$ . 
The following Lemma shows that the cardinality of $\PP(\alpha,\beta)$ is $2^{2r-2s}$.
\begin{lemma}
\label{polariz_count}
The number of  distinct polarizations $(\mu^+, \mu^-)$ corresponding to a pair of strict partitions $S \ss T$, 
with $T$ of cardinality $2r-s$ and $S$ of cardinality $s$  is $2^{2r -2s}$.
\end{lemma}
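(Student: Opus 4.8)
The plan is to reduce the count to a single subset choice. Since $\mu^+$ and $\mu^-$ are strict partitions (with $0$ allowed as a part), each is determined by its set of parts, and those parts are pairwise distinct; hence the symbols ``$\cap$'' and ``$\cup$'' in (\ref{polariz_S_T}) may be read as ordinary set operations on finite sets of nonnegative integers, with no multiplicities to track. So a polarization associated to the pair $S\subseteq T$ is exactly a pair $(\mu^+,\mu^-)$ of finite sets with $\mu^+\cap\mu^-=S$ and $\mu^+\cup\mu^-=T$.

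First I would show that such a pair is determined by $\mu^+$ alone, and that $\mu^+$ may be an arbitrary set with $S\subseteq\mu^+\subseteq T$. Indeed, $\mu^+\cup\mu^-=T$ gives $\mu^+\subseteq T$, and $\mu^+\cap\mu^-=S$ gives $S\subseteq\mu^+$; moreover $\mu^-$ is forced, since it must contain $T\setminus\mu^+$ (to exhaust $T$ in the union) and must contain $S$, but cannot meet $\mu^+\setminus S$ (else the intersection would exceed $S$), so $\mu^-=S\cup(T\setminus\mu^+)$. Conversely, a one-line verification shows that for any $\mu^+$ with $S\subseteq\mu^+\subseteq T$ the pair $\bigl(\mu^+,\,S\cup(T\setminus\mu^+)\bigr)$ does satisfy (\ref{polariz_S_T}). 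Thus $\mu^+\mapsto\bigl(\mu^+,\,S\cup(T\setminus\mu^+)\bigr)$ is a bijection from $\{\mu^+ : S\subseteq\mu^+\subseteq T\}$ onto the set of polarizations for $S\subseteq T$ (which equals $\PP(\alpha,\beta)$ when $S=\alpha\cap I(\beta)$ and $T=\alpha\cup I(\beta)$).

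Finally, adjoining $S$ gives a bijection between $\{\mu^+ : S\subseteq\mu^+\subseteq T\}$ and the collection of subsets of $T\setminus S$, so the count is $2^{\#(T)-\#(S)}=2^{(2r-s)-s}=2^{2r-2s}$, using the hypotheses $\#(S)=s$ and $\#(T)=2r-s$ (the latter being just inclusion–exclusion, $\ell(\alpha\cup I(\beta))=\ell(\alpha)+\ell(I(\beta))-\ell(\alpha\cap I(\beta))=2r-s$, in the case of interest). I do not expect any real obstacle: the only points requiring (minor) care are the remark that strictness lets one treat $\mu^\pm$ literally as sets, so that the defining conditions (\ref{polariz_S_T}) are genuine set equalities with no repeated parts, and the trivial bookkeeping $\#(T\setminus S)=2r-2s$.
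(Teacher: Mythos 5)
Your argument is correct and is essentially the paper's own proof: both count polarizations by noting that the $s$ elements of $S$ must lie in both $\mu^+$ and $\mu^-$, while each of the remaining $2r-2s$ elements of $T\setminus S$ independently goes to exactly one of the two, giving $2^{2r-2s}$. Your version merely makes explicit the bijection $\mu^+\mapsto\bigl(\mu^+,\,S\cup(T\setminus\mu^+)\bigr)$ and the verification that $\mu^-$ is forced, which the paper leaves implicit.
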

\begin{proof}
The total number of elements in $T$  is $2r-s$, and the number of these that are in $S$,
and hence in both $\mu^+$ and $\mu^-$ is $s$.
The remaining $2(r-s)$ elements are either in one or the other of the two strict partitions 
$\mu^{\pm}$, but not both,  and hence there are  $2^{2(r -s)}$ distinct ways to
select the polarization $(\mu^+, \mu^-)$. 
\end{proof}

Let
\be
m^+(\mu) := \#(\mu^+) , \quad m^-(\mu) =\#(\mu^-) ,
\ee
denote the cardinalities of $\mu^+$ and $\mu^-$.
Their sum is 
\be
m^+(\mu) + m^-(\mu) =  2r,
\label{m+m-}
\ee
and therefore, they are either both even or both odd.
Denoting the cardinalities of the  intersections $\alpha \cap \mu^-$ and  $I(\beta) \cap \mu^- $
\be
\pi(\mu) :=\#(\alpha \cap \mu^-), \quad \tilde{\pi}(\mu) := \#(I(\beta) \cap \mu^- ),
\label{pi_j_def}
\ee
it follows that 
\be
\pi(\mu)+\tilde{\pi}(\mu) = m^-(\mu) +  s.
\label{pi_tildepi}
\ee

\begin{definition}{\bf Binary markings.}
For any integer $j$ between $0$ and $2^{2r}-1$, we have the associated binary sequence
\be 
\epsilon(j) :=(\epsilon_1(j), \dots, \epsilon_{2r}(j)),  
\ee
where $\epsilon_k(j) =+$ if the $k$th element in the binary representation of $j$ is $0$
and $\epsilon_k(j) =-$ if the $k$th element is $1$.
We call the product
\be
\phi^{(j)}(\alpha, \beta)
:=\phi^{\epsilon_1(j)}_{\alpha_1} \cdots \phi^{\epsilon_r(j)}_{\alpha_r}  \phi^{\epsilon_{r+1}(j)}_{\beta_1+1} 
\cdots \phi^{\epsilon_{2r}(j)}_{\beta_r+1} 
\label{binary_marking}
\ee
the $j$th {\em binary marking} of the sequence $(\alpha, I(\beta))$.
\end{definition}
If any two factors in (\ref{binary_marking}) coincide, the product vanishes. The number of these is $2^{2r-s}(2^s-1)$,
and the number of nonvanishing $\phi^{(j)}(\alpha, \beta)$'s is $2^{2r-s}$.
For the latter, there are $s$ pairs $(m, n)$ of type
$(\phi^+_{\alpha_m}, \phi^-_{\beta_n +1})$ and $(\phi^-_{\alpha_m}, \phi^+_{\beta_n +1})$ 
where $\alpha_m=\beta_n+1 \in S$.
 By  reordering the product (\ref{binary_marking}) so that all the $\phi^+$'s appear to the left, 
  and the $\phi^-$'s appear to the right,  with all the subscripts in each group in decreasing order,
every binary sequence $\epsilon(j)$  for which $\phi^{(j)}(\alpha, \beta)\neq 0$ 
determines a unique polarization $\mu(j) = (\mu^+(j), \mu^-(j))$ such that
\be
\phi^{(j)}(\alpha, \beta) =: 
\pm \phi^+_{\mu^+_1}  \cdots \phi^+_{\mu^+_{m^+(\mu)}} \phi^-_{\mu^-_1} \cdots \phi^-_{\mu^-_{m^-(\mu)}}.
  \label{perm_sign_mu_i}
\ee
Of these, the polarization determined by 
\be
\epsilon(2^r-1) = (\underbrace{+, \cdots +}_{r \ \text{terms}}, \underbrace{-, \cdots -}_{r \ \text{terms}}), 
\label{canon_polariz}
\ee
is
\be
(\mu^+(2^r-1), \mu^-(2^r-1)) := (\alpha,I(\beta)),
\ee
and this will be referred to as the {\em canonical polarization}. 

The $2^{2r-s}$ binary sequences  $\epsilon(j)$  for which $\phi^{(j)}(\alpha, \beta)\neq 0$
may be divided into $2^{2r-2s}$ equivalence classes $[\epsilon(j)]$, each containing $2^{s}$ elements, for which the binary markings 
are equal, within a sign, and hence the polarization is the same:
\be
[\epsilon(j)] = [\epsilon(\tilde{j})] \quad \text{if and only if} \quad \mu(j)=\mu(\tilde{j}).
\ee
There is thus a bijection between the set $\PP(\alpha,\beta)$ of polarizations and the set of equivalence classes
$\{[\epsilon(j)]\}$ of binary  sequences which define (within a sign) the same nonvanishing
binary markings $\phi^{(j)}(\alpha, \beta)$.
The $2^s$ elements of each equivalence class $[\epsilon(j)]$ are related by interchanging any number of
 the $s$ pairs $(m, n)$ of type
\be
(\phi^+_{\alpha_m}, \phi^-_{\beta_n +1}) \leftrightarrow (\phi^-_{\alpha_m}, \phi^+_{\beta_n +1}),
\label{phi+_phi-_inter}
\ee 
where $\alpha_m=\beta_n+1 \in S$. 

 The set of all $\alpha_m$'s and $\beta_n+1$'s appearing in (\ref{binary_marking}) with a $+$ superscript, 
written in decreasing order, is the strict partition $\mu^+(j)$  forming the first part of the polarization $\mu(j)$ 
and the set of all $\alpha_m$'s and $\beta_n+1$'s in appearing with a $-$ superscript,
 also written in decreasing order, is the second part $\mu^-(j)$.

In every equivalence class $\epsilon(j)$, there is a unique element $\epsilon(j_0)$, for which 
\be
\epsilon_{\alpha_m}(j_0) = +,  \quad \forall \ \alpha_m \in S.
\ee
which will be referred to as the {\em canonical representative}.
 \begin{remark}
There is one canonical representative $\epsilon(j_0)$  in each equivalence class $[\epsilon(j)]$, 
so the number of these  is equal to the number $2^{2r-2s}$  of (nontrivial) equivalence classes, and they
 are in bijective correspondence with the  polarizations $\mu\in \PP(\alpha, \beta)$.
\end{remark}
 
 Let $\sigma(j)$ denote the parity of the number of $\alpha_m$'s in the binary marking (\ref{binary_marking}) that are in $S$ and
 appear with  a superscript $-$, which equals the number of exchanges (\ref{phi+_phi-_inter}) of elements in the product
 (\ref{binary_marking}) defining $\phi^{(j)}(\alpha, \beta)$ needed to convert it to $\phi^{(j_0)}(\alpha, \beta)$
 \be
 \phi^{(j)}(\alpha, \beta) = \sigma(j) \phi^{(j_0)}(\alpha, \beta),
  \label{sigma_j_def}
 \ee
In particular,  $\sigma(j_0)=1$.
\begin{definition}{\bf Polarization sign.}
\label{sign_perm}
We define the {\em sign of the polarization} $\mu =\mu(j)$, denoted $\sgn(\mu)$
by 
\be
\phi^{(j_0)}(\alpha, \beta) =: 
\sgn(\mu)\ \phi^+_{\mu^+_1}  \cdots \phi^+_{\mu^+_{m^+(\mu)}} \phi^-_{\mu^-_1} \cdots \phi^-_{\mu^-_{m^-(\mu)}}.
  \label{perm_sign_mu_i}
\ee
\end{definition}
It follows that
\be
\phi^{(j)}(\alpha, \beta) = 
\sigma(j)\sgn(\mu)\ \phi^+_{\mu^+_1}  \cdots \phi^+_{\mu^+_{m^+(\mu)}} \phi^-_{\mu^-_1} \cdots \phi^-_{\mu^-_{m^-(\mu)}}.
\ee

\begin{example}
The partition $\lambda = ((2,0),(1,0))$  has 
\bea
\alpha &\&=(2,0), \quad I(\beta) = (2,1) ,\cr
S &\&= (2), \quad T =(2,1,0), \quad r=2, \quad s =1.
\eea
The  elements $j_0$ are $2, 3, 6$ and $7$.
The $2^2=4$ distinct associated polarizations are 
\bea
\mu(2) = \mu(8)&\& = (2,1,0), (2)),  \quad  \mu(3) =\mu(9)= ((2,0),(2,1)) \cr
\mu(6) =\mu(12) &\&=((2,1), (2,0)), \quad  \mu(7)=\mu(13) =((2), (2, 1,0)).
\eea
The values of $\sigma(j)$ and $\sgn(\mu(j))$ for these are:
\bea
\sigma(2) &\&= +, \quad \sigma(8) = -, \quad \sgn(\mu(2))=\sgn(\mu(8)) = +, \cr
\sigma(3) &\&= +, \quad \sigma(9) = -, \quad \sgn(\mu(3))=\sgn(\mu(9)) = +, \cr
\sigma(6) &\&= +, \quad \sigma(12) = -, \quad \sgn(\mu(6))=\sgn(\mu(12)) = - ,\cr
\sigma(7) &\&= +, \quad \sigma(13) = -, \quad \sgn(\mu(7))=\sgn(\mu(13)) = + .
\eea
The vanishing binary markings $\phi^{(j)}(\alpha, \beta)$ correspond to: $j=0, 1,  4, 5, 10,  11, 14, 15$.
\end{example}

\begin{definition}{\bf Supplemented partitions.}
If $\mu$ is a strict partition of cardinality $r$ (with $0$ allowed as a part), 
define the associated {\em supplemented partition} $\hat{\mu}$ to be
\be
\hat{\mu} := \begin{cases}  \mu ,\ \text{ if } r \ \text{ is even}, \cr
                  (\mu,0) , \  \text{ if } r \text{ is odd}.
                  \end{cases}
\ee
\end{definition}
Note that $\hat{\mu}$ is always of even cardinality, but not necessarily
 {\em strict} since, if $r$ is odd, it may possibly have two $0$ parts $(\mu_r=0, 0)$ at the end. 
If $m^\pm(\mu)$ are the cardinalities of $\mu^{\pm}$, we denote by $\hat{m}^\pm(\mu)$ 
the cardinalities of $\hat{\mu}^{\pm}$.

\section{Schur functions as  sums over products Schur $Q$-functions}
\label{schur_Q_schur_bilin}

Our main result expresses  any Schur function, $s_{(\alpha|\beta)}({\bf t}')$, evaluated at $\tb'$, as a sum
 over  products of Schur $Q$-functions.
\begin{theorem} 
\label{S_QQ}
For any partition $(\alpha|\beta)=(\alpha_1,\dots,\alpha_r|\beta_1,\dots,\beta_r)$, we have
 \be
s_{(\alpha|\beta)}(\tb')  = {(-1)^{\tfrac{1}{2}r(r+1)+s} \over 2^{2r-s}} \sum_{\mu\in \PP(\alpha, \beta)} \sgn(\mu)
 (-1)^{\pi(\mu)+\tfrac{1}{2}\hat{m}^-(\mu)}
 \tilde{Q}_{\hat{\mu}^+}(\tfrac{1}{2}\tb_B) \tilde{Q}_{\hat{\mu}^-}(\tfrac{1}{2}\tb_B).
   \label{s_lambda_Q_bilinear_bis}
 \ee
 \end{theorem}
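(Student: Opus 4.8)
The plan is to start from the neutral-fermion vacuum expectation value for $s_{(\alpha|\beta)}(\tb')$ furnished by Lemma \ref{key-Lemma}, eq.~(\ref{s_alpha_beta_neutral_fermion}), and simply multiply out the $2r$ binomials $\phi^+_{\alpha_j}(\tb_B)-i\phi^-_{\alpha_j}(\tb_B)$ and $\phi^+_{\beta_j+1}(\tb_B)+i\phi^-_{\beta_j+1}(\tb_B)$. This produces $2^{2r}$ monomials, one for each binary sequence $\epsilon(j)$ with $0\le j\le 2^{2r}-1$; since the factors occur in the same order as in (\ref{binary_marking}), the monomial attached to $\epsilon(j)$ is a scalar $c_j$ times the binary marking $\phi^{(j)}(\alpha,\beta)$ of Section \ref{polariz_binary} (with all operators carrying the $\tb_B$ dressing). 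Reading the scalar off factor by factor, a $-$ choice on an $\alpha$-factor contributes $-i$, a $-$ choice on an $I(\beta)$-factor contributes $+i$, and $+$ choices contribute $1$, so $c_j=(-i)^{a(j)}i^{b(j)}$, where $a(j)$ and $b(j)$ count the $\alpha$- and $I(\beta)$-positions marked $-$. First I would discard the $2^{2r-s}(2^s-1)$ vanishing markings and organize the surviving $2^{2r-s}$ terms into the $2^{2r-2s}$ equivalence classes of Section \ref{polariz_binary}, one class per polarization $\mu\in\PP(\alpha,\beta)$ (Lemma \ref{polariz_count}), each of size $2^s$.

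Next I would evaluate the sum over one class. For its canonical representative $\epsilon(j_0)$, every value of $S$ carries a $+$ on its $\alpha$-position and hence a $-$ on its $I(\beta)$-position, which gives $a(j_0)=\#(\alpha\cap\mu^-)-s=\pi(\mu)-s$ and $b(j_0)=\tilde\pi(\mu)$, so by (\ref{pi_tildepi})
\[
c_{j_0}=(-i)^{\pi(\mu)-s}\,i^{\tilde\pi(\mu)}=(-1)^{\pi(\mu)+s}\,i^{m^-(\mu)}.
\]
For any other $\epsilon(j)$ in the class, each interchange (\ref{phi+_phi-_inter}) relating $\epsilon(j)$ to $\epsilon(j_0)$ multiplies the scalar by $(-i)/i=-1$ and simultaneously accounts for one factor of $\sigma(j)$ in $\phi^{(j)}(\alpha,\beta)=\sigma(j)\,\phi^{(j_0)}(\alpha,\beta)$, eq.~(\ref{sigma_j_def}); hence $c_j\,\phi^{(j)}(\alpha,\beta)=c_{j_0}\,\phi^{(j_0)}(\alpha,\beta)$ is the same for all $2^s$ members, and by Definition \ref{sign_perm} the class sum equals
\[
2^s(-1)^{\pi(\mu)+s}\,i^{m^-(\mu)}\,\sgn(\mu)\;\phi^+_{\mu^+_1}(\tb_B)\cdots\phi^+_{\mu^+_{m^+(\mu)}}(\tb_B)\,\phi^-_{\mu^-_1}(\tb_B)\cdots\phi^-_{\mu^-_{m^-(\mu)}}(\tb_B).
\]

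I would then take the vacuum expectation value of each class sum and apply the Factorization Lemma \ref{factorization_lemma}. Since $m^+(\mu)+m^-(\mu)=2r$ by (\ref{m+m-}), the two cardinalities have the same parity. If both are even the VEV factors into the product of a $\phi^+$-VEV and a $\phi^-$-VEV, each evaluated by (\ref{Q+fermi})--(\ref{Q-fermi}), giving $2^{-r}\tilde{Q}_{\mu^+}(\tfrac12\tb_B)\tilde{Q}_{\mu^-}(\tfrac12\tb_B)$ with $\hat{\mu}^\pm=\mu^\pm$; if both are odd, Lemma \ref{factorization_lemma} supplies the factor $2i$ and inserts $\phi^+_0,\phi^-_0$, and since $\phi^\pm_0|0\rangle=\phi^\pm_0(\tb_B)|0\rangle$ by Remark \ref{undressed-vs-dressed}, eq.~(\ref{phi_0_t_vac}), the two index strings become exactly $\hat{\mu}^+$ and $\hat{\mu}^-$, so that (\ref{Q+fermi})--(\ref{Q-fermi}) yield $i\cdot 2^{-r}\tilde{Q}_{\hat{\mu}^+}(\tfrac12\tb_B)\tilde{Q}_{\hat{\mu}^-}(\tfrac12\tb_B)$. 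As $\hat{m}^-(\mu)$ is even in both cases, one checks that $i^{m^-(\mu)}\langle 0|\phi^+_{\mu^+_1}(\tb_B)\cdots\phi^-_{\mu^-_{m^-(\mu)}}(\tb_B)|0\rangle=(-1)^{\tfrac12\hat{m}^-(\mu)}\,2^{-r}\tilde{Q}_{\hat{\mu}^+}(\tfrac12\tb_B)\tilde{Q}_{\hat{\mu}^-}(\tfrac12\tb_B)$: in the even case because $i^{m^-}=(-1)^{m^-/2}$ with $\hat{m}^-=m^-$, in the odd case because the extra $i$ turns $i^{m^-}$ into $i^{m^-+1}=i^{\hat{m}^-}=(-1)^{\hat{m}^-/2}$.

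Finally I would reassemble: carrying along the prefactor $(-1)^{\frac12 r(r+1)}2^{-r}$ from Lemma \ref{key-Lemma} together with the $2^s$ from the class size and the $2^{-r}$ just obtained (so that $2^{-r}\cdot 2^s\cdot 2^{-r}=1/2^{2r-s}$), and pulling the common $(-1)^s$ out of each $(-1)^{\pi(\mu)+s}$, the sum over $\mu\in\PP(\alpha,\beta)$ collapses to the right-hand side of (\ref{s_lambda_Q_bilinear_bis}). The proof is essentially all sign and phase bookkeeping, and the one genuinely delicate point I expect is the matching just described: the imaginary unit $i^{m^-(\mu)}$ generated by expanding the neutral-fermion binomials must be reconciled with the factor of $i$ that Lemma \ref{factorization_lemma} produces in the odd-parity case and withholds in the even-parity case, so that every contribution reduces to the single real sign $(-1)^{\pi(\mu)+\frac12\hat{m}^-(\mu)}$ that appears in the statement.
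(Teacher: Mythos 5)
Your proposal is correct and follows essentially the same route as the paper's proof: expand the neutral-fermion VEV of Lemma \ref{key-Lemma} into the $2^{2r}$ binary markings, group the nonvanishing ones into the $2^s$-element equivalence classes indexed by polarizations, and evaluate each class via the Factorization Lemma and eqs.~(\ref{Q+fermi})--(\ref{Q-fermi}). Your explicit derivation of the coefficient $c_{j_0}=(-1)^{\pi(\mu)+s}i^{m^-(\mu)}$ from the counts $a(j_0)=\pi(\mu)-s$, $b(j_0)=\tilde\pi(\mu)$ and identity (\ref{pi_tildepi}) actually fills in a step the paper states without proof, and your reconciliation of $i^{m^-(\mu)}$ with the factor $2i$ in the odd-parity case matches the paper's treatment exactly.
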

\begin{proof}  
Using Lemma \ref{key-Lemma}, expanding the product in (\ref{s_alpha_beta_neutral_fermion}) 
and using (\ref{sign_perm}) and (\ref{sigma_j_def}) gives
\bea
s_{\alpha|\beta)}({\bf t}') &\&= {(-1)^{\tfrac{1}{2} r(r+1)}\over 2^r}
\sum_{j=0}^{2^{2r}-1}(-1)^{\pi(\mu(j))+s} i^{m^-(\mu(j))} \sigma(j)
\langle 0 | \phi^{(j)}({\tb}_B)(\alpha, \beta) |0 \rangle \cr
&\& = {(-1)^{\tfrac{1}{2} r(r+1)+s}\over 2^{r-s}}
\sum_{\mu\in \PP(\alpha, \beta)}(-1)^{\pi(\mu)} i^{m^-(\mu)} \sgn(\mu) \cr
&\& {\hskip 120 pt} \times \langle 0 | \ \phi^+_{\mu^+_1} ({\tb}_B) \cdots 
\phi^+_{\mu^+_{m^+(\mu)}}({\tb}_B) \phi^-_{\mu^-_1}({\tb}_B) \cdots \phi^-_{\mu^-_{m^-(\mu)}}({\tb}_B)|0 \rangle.
\cr
&\&
\label{exp_phi_alpha_beta}
\eea
If $m^+(\mu)$ and  $m^-(\mu)$ are both even, by Lemma \ref{factorization_lemma} and eq.~(\ref{Q+fermi}) we have
\bea
&\&\langle 0 | \ \phi^+_{\mu^+_1} ({\tb}_B) \cdots 
\phi^+_{\mu^+_{m^+(\mu)}}({\tb}_B) \phi^-_{\mu^-_1}({\tb}_B) \cdots \phi^-_{\mu^-_{m^-(\mu)}}({\tb}_B)|0 \rangle \cr
&\& = \langle 0 | \phi^+_{\mu^+_1}({\tb}_B)  \cdots 
\phi^+_{\mu^+_{m^+(\mu)}}({\tb}_B) |0\rangle
 \langle 0 | \phi^-_{\mu^-_1}({\tb}_B) \cdots \phi^-_{\mu^-_{m^-(\mu)}}({\tb}_B)|0 \rangle \cr
&\& = 2^{-r} \tilde{Q}_{\mu^+} (\tfrac{1}{2}{\bf t}_B) \tilde{Q}_{\mu^-} (\tfrac{1}{2}{\bf t}_B) 
= 2^{-r} \tilde{Q}_{\hat{\mu}^-} (\tfrac{1}{2}{\bf t}_B) \tilde{Q}_{\hat{\mu}^-} (\tfrac{1}{2}{\bf t}_B)),
\eea
and
\be
i^{m^-(\mu)} = 
(-1)^{\tfrac{1}{2}m^-(\mu)} = 
(-1)^{\tfrac{1}{2}\hat{m}^-(\mu)} .
\ee
If $m^+(\mu)$ and  $m^-(\mu)$ are both odd,  by Lemma \ref{factorization_lemma} and eq.~(\ref{Q-fermi}) we have
\bea
&\&\langle 0 | \ \phi^+_{\mu^+_1} ({\tb}_B) \cdots 
\phi^+_{\mu^+_{m^+(\mu)}}({\tb}_B) \phi^-_{\mu^-_1}({\tb}_B) \cdots \phi^-_{\mu^-_{m^-(\mu)}}({\tb}_B)|0 \rangle \cr
&\& = 2i \langle 0 | \phi^+_{\mu^+_1}({\tb}_B)  \cdots 
\phi^+_{\mu^+_{m^+(\mu)}}({\tb}_B)   \phi^-_0({\tb}_B) |0\rangle
\langle 0 | \phi^-_{\mu^-_1}({\tb}_B) \cdots \phi^-_{\mu^-_{m^-(\mu)}}({\tb}_B) \phi^-_0({\tb}_B) |0 \rangle \cr
&\& =2^{-r}i \tilde{Q}_{\hat{\mu}^+} (\tfrac{1}{2}{\bf t}_B) \tilde{Q}_{\hat{\mu}^-} (\tfrac{1}{2}{\bf t}_B)),
\eea
and
\be
i^{m^-(\mu)+1} = (-1)^{\tfrac{1}{2}m^-(\mu)+1} = (-1)^{\tfrac{1}{2}\hat{m}^-(\mu)} .
\ee
In both cases, substituting these in (\ref{exp_phi_alpha_beta}) gives (\ref{s_lambda_Q_bilinear_bis}).
\end{proof}
\begin{remark}
 Note  that half the $2^{2(r-s)}$ terms in  the sum (\ref{s_lambda_Q_bilinear_bis})
  are the same as the other half (under the interchange \hbox{$(\mu^+, \mu^-)\leftrightarrow (\mu^-, \mu^+)$}), 
  leaving only $2^{2r-2s-1}$ distinct terms (except for the case $s=r$, where there is just one).
\end{remark}

\subsection{Examples.}
\label{examples_doubles}

\begin{example}[cf. Example \ref{example3.1}]
\label{example5.1}
Consider the case of a ``double'' 
\be
D(\alpha)=(\alpha_1, \dots, \alpha_r | \alpha_1-1, \cdots , \alpha_r -1),
\ee
for which $s=r$.
The only binary sequence of type $\epsilon(j_0)$  (i.e.~for which none of the elements $\alpha_m$
corresponds to an upper index $-$) is:
\be
\epsilon (2^{r}-1) = (\underbrace{+, \cdots +}_{r \ \text{terms}}, \underbrace{-, \cdots -}_{r \ \text{terms}}), 
\ee
which gives the canonical polarization 
\be
\mu =\mu(2^r -1) = ((\alpha_1, \dots, \alpha_r), (\alpha_1, \dots, \alpha_r)).
\ee
For this case we have
\be
\sgn(\mu) = +1, \quad \pi(\mu) = r, \quad \hat{m}^-(\mu) =\begin{cases}  r \ \text{ if } \ r \text{ is even }\cr
 r+1 \ \text{ if } \ r \text{ is odd} \end{cases}.
\ee
Therefore
\be
{(-1)^{\tfrac{1}{2}r(r+1) +s} \over 2^{2r -s}}\sgn(\mu)(-1)^{\pi(\mu) + \tfrac{1}{2} \hat{m}^-(\mu)} = {1\over 2^r}, 
\ee
whether $r$ is even or odd, and eq.~(\ref{s_lambda_Q_bilinear_bis}) gives
\be
s_{D(I(\alpha))}(\tb') =
\begin{cases}
 2^{-r}\left(\tilde{Q}_{\alpha}\left(\tfrac12 \tb_B\right)  \right)^2 \ \text{ if } r \ \text{is even},\cr
 2^{-r}\left(\tilde{Q}_{(\alpha, 0)} \left(\tfrac12 \tb_B\right)  \right)^2 \ \text{ if } r \ \text{is odd}, 
 \end{cases} 
 \ee
 in agreement with eq.~(\ref{schur_D_alpha_Q}).
\end{example}

\begin{example}[cf.~Example \ref{example3.2}]
\label{example5.2}
 Consider a hook partition $\lambda=(\alpha_1|\beta_1)$ with $\alpha_1> \beta_1+1$. 
Then $S=  \emptyset $, $(r,s) = (1,0)$. Therefore there are four admissible values for $\mu$:
\bea
(\mu^+(0), (\mu^-(0))&\& =((\alpha_1,\beta_1+1), (\emptyset)), \quad  
(\mu^+(1), \mu^-(1))=((\alpha_1), (\beta_1+1)), \cr
&\& \cr
(\mu^+(2),\mu^-(2))&\&=((\beta_1+1),(\alpha_1)), \quad
(\mu^+(3), \mu^-(3)= ((\emptyset), (\alpha_1,\beta_1+1)).
\eea
Assuming $\alpha_1>\beta_1+1$ to fix the order in the notations and the correct sign factor,
the values of $\sgn(\mu), \pi(\mu)$ and $\hat{m}^-(\mu)$ for each case is:
\bea
\sgn(\mu(0)) &\& =+1, \quad \pi(\mu(0) = 0, \quad \hat{m}^-(\mu(0)) = 0, \cr
\sgn(\mu(1)) &\& =+1, \quad \pi(\mu(1) = 0, \quad \hat{m}^-(\mu(1)) = 2, \cr
\sgn(\mu(2)) &\& =-1, \quad \pi(\mu(2) = 1, \quad \hat{m}^-(\mu(2)) = 2, \cr
\sgn(\mu(3)) &\& =+1, \quad \phi\mu(3) = 1, \quad \hat{m}^-(\mu(3)) = 2, 
\eea
 Substituting in eq.~(\ref{s_lambda_Q_bilinear_bis})  gives
\be
s_{(\alpha_1|\beta_1)}(\tb') =
\frac12 \tilde{Q}_{(\alpha_1,0)}(\tfrac12\tb_B)\tilde{Q}_{(\beta_1+1,0)}(\tfrac12\tb_B)  - \frac12 \tilde{Q}_{(\alpha_1,\beta_1+1)}(\tfrac12\tb_B) \tilde{Q}_\emptyset, 
\ee
which is the same as (\ref{schur_i_j_Q}).
\end{example}

 \begin{example}[cf.~Examples  \ref{example3.2}, \ref{example3.3},  for $r=1,2$]  
\label{example5.3}  
 For  $r\ge1$, we have the following generalization of Example \ref{example5.2}. 
 Assume the Frobenius indices satisfy $\beta_j+1= \alpha_k$ for all pairs except possibly one, 
say  $\alpha_1> \beta_1+1> \alpha_2$, as in Example \ref{example3.3}.
Let 
\be
\lambda=(\alpha_1,\alpha_2,\dots,\alpha_r|\beta_1,\alpha_2-1,\dots,\alpha_r-1),
\ee 
with $\alpha_1 > \beta_1+1> \alpha_2$ if $r > 1$. Then $s=r-1$ and
 there are four possible polarizations $\{(\mu^+(i), \mu^-(i))\}_{i=1, \dots, 4}$.
\bea
(\mu^+(2^r-1), \mu^-(2^r-1)) &\& =((\alpha_1,\alpha_2,\dots,\alpha_r), (\beta_1+1,,\alpha_2,\dots,\alpha_r)), \cr
(\mu^+(2^{2r-1} +2^{r-1} -1), \mu^-(2^{2r-1}+2^{r-1} -1))&\&=((\beta_1+1,\alpha_2,\dots,\alpha_r), (\alpha_1,\alpha_2,\dots,\alpha_r)), \cr
(\mu^+(2^{2r-1}+ 2^r -1), \mu^-(2^{2r-1}+ 2^r -1))&\&=((\alpha_2,\alpha_3,\dots,\alpha_r), (\alpha_1,\beta_1+1,\alpha_2,\dots,\alpha_r)), \cr
(\mu^+(2^{r-1}-1), \mu^-(2^{r-1}-1))&\&=((\alpha_1,\beta_1+1,\alpha_2,\dots,\alpha_r,) (\alpha_2,\dots,\alpha_r)). \cr
&\&
\eea
The corresponding values of $\sgn(\mu), \pi(\,u)$ and $\hat{m}^-(\mu)$ are:
\bea
\sgn(\mu(2^r -1))=1 &\&  , \quad \pi(\mu(2^r -1)) = r-1 , \quad \hat{m}^-(\mu(2^r -1)) = \hat{r} , \cr 
\sgn(\mu(2^{2r-1} +2^{r-1} -1)) = -1&\&  , \quad \pi(\mu(2^{2r-1} +2^{r-1} -1)) = r , \quad \hat{m}^-(\mu(2^{2r-1} +2^{r-1} -1)) =  \hat{r}, \cr 
\sgn(\mu(2^{2r-1}+ 2^r -1)) =(-1)^{r-1}&\&  , \quad \pi(\mu(2^{2r-1}+ 2^r -1)) = r, \quad \hat{m}^-(\mu(2^{2r-1}+ 2^r -1))) = \widehat{r+1}, \cr 
\sgn(\mu(2^{r-1}-1))= (-1)^{r-1} &\&  , \quad \pi(\mu(2^{r-1}-1)) = r-1, \quad \hat{m}^-(\mu(2^{r-1}-1)) = \widehat{r-1}. \cr
&\&
\eea
Substituting these in eq.~(\ref{s_lambda_Q_bilinear_bis})  gives
\bea
s_{(\alpha_1, \dots, \alpha_r |\beta_1,\alpha_2-1,\dots,\alpha_r-1)}(\tb')&\&= 
\tfrac{1}{2^r} \big(\tilde{Q}_{\hat{\mu}^+(2^r-1)}(\tfrac12\tb_B)\tilde{Q}_{\hat{\mu}^-(2^r-1)}(\tfrac12\tb_B)\cr
&\& \cr
 &\&\quad -  \tilde{Q}_{\hat{\mu}^+(2^{r-1}-1)}(\tfrac12\tb_B)\tilde{Q}_{\hat{\mu}^-(2^{r-1}-1)}(\tfrac12\tb_B)\big), 
\eea
in agreement with (\ref{schur_i_j_Q}) for $r=1$ and (\ref{s_r2_s1}) for $r=2$.
\end{example}
\bigskip
\bigskip
\noindent 
\small{ {\it Acknowledgements.} The authors would like to thank Johan van de Leur  and Ferenc Balogh for helpful discussions.
The work of J.H. was supported by the Natural Sciences and Engineering Research Council of Canada (NSERC);
that of A. Yu. O.  by the P.P. Shirshov Institute of Oceanology RAS state assignment No 0128-2021-0002.
\bigskip 
\bigskip

\end{document}